\newtheorem{theorem}{Theorem}
\newtheorem{example}[theorem]{Example}
\newtheorem{remark}[theorem]{Remark}
\newenvironment{proof}[1][Proof]{\noindent\textbf{#1.} }{\ \rule{0.5em}{0.5em}}
\begin{document}

\title{{\large \textbf{Phi-Divergence test statistics for testing the
validity of latent class models for binary data}}}
\author{Felipe$^{1}$, A., Martin$^{2}$, N., Miranda$^{1}$, P. and Pardo$%
^{1}$, L. \\
$^{1}$Department of Statistics and O.R., Complutense University of
Madrid, 28040-Madrid, Spain \\
$^{2}$Department of Statistics, Carlos III University of Madrid, 28903-Getafe (Madrid), Spain}
\date{}
\maketitle

\begin{abstract}
The main purpose of this paper is to present new families of test statistics
for studying the problem of goodness-of-fit of some data to a latent class
model for binary data. The families of test statistics introduced are based on phi-divergence
measures, a natural extension of maximum likelihood. We also treat the problem of testing a nested sequence of latent class
models for binary data. For these statistics, we obtain their asymptotic distribution.
Finally, a simulation study is carried out in order to compare the efficiency, in the sense of
the level and the power, of the new statistics considered in this paper for sample sizes that are not big enough to apply the asymptotical results.
\end{abstract}

\begin{center}
\bigskip
\end{center}

\noindent \textbf{MSC}{\small : }Primary 62F03; 62F05; Secondary 62H15

\noindent \textbf{Keywords}{\small :} Latent class models, Minimum
phi-divergence estimator, Maximum likelihood estimator, Asymptotic
distribution, Phi-divergence test statistics, Nested latent class models

\section{Introduction}

Consider a set $\mathcal{P}$ of $N$ people: $\mathcal{P}:=\{P_{1},...,P_{N}%
\} $. Each person $P_{v}$ is asked to answers to $k$ dichotomous items $I_{1},...,I_{k};$
let us denote by $y_{vi}$ the answer of person $P_{v}$ to item $I_{i},$ i.e.

\begin{equation*}
y_{vi}:=\left\{
\begin{array}{cl}
1 & \text{if the answer of }P_{v}\text{ to }I_{i}\text{ is correct} \\
0 & \text{otherwise}%
\end{array}%
\right. .
\end{equation*}

Let $\boldsymbol{y}_{v}:=(y_{v1},...,y_{vk})$ denote a generic pattern of
right and wrong answers to the $k$ items given by person $P_{v}.$ In order to
explain the statistical relationships among the observed variables, a
categorical latent variable (categorical unobservable variable) is
postulated to exist, whose different levels partition set $\mathcal{P}$ into
$m$ mutually exclusive and exhaustive {\it latent classes}. Let us denote these
classes by $C_{1},...,C_{m}$ and their corresponding relative sizes by $%
w_{1},...,w_{m};$ thus, $w_{j}$ denotes the probability of a randomly
selected person $P_{v}\in \mathcal{P}$ belongs to class $C_{j},$ i.e.

\begin{equation*}
w_j=Pr(P_{v}\in C_j),\, j=1, ..., m.
\end{equation*}

We denote by $p_{ji}$ the probability of a right answer of $P_{v}$ to the
item $I_i$ under the assumption that $P_{v}$ is in class $C_j:$

\begin{equation*}
p_{ji}= Pr(y_{v i}=1 | P_{v}\in C_j),\, j=1, ..., m,\, i=1, ..., k.
\end{equation*}

Let $\mathbf{y_{\nu }}$ be a possible answer vector. We shall assume that in
each class the answers for the different questions are stochastically
independent; therefore, we can write

\begin{equation*}
Pr(\boldsymbol{y}\mathbf{_{\nu }}|P_{v}\in
C_{j})=\prod_{i=1}^{k}p_{ji}^{y_{\nu i}}(1-p_{ji})^{1-y_{\nu i}},
\end{equation*}%
and
\begin{equation}
Pr(\boldsymbol{y}\mathbf{_{\nu }})=\sum_{j=1}^{m}w_{j}%
\prod_{i=1}^{k}p_{ji}^{y_{\nu i}}(1-p_{ji})^{1-y_{\nu i}}.  \label{eq1}
\end{equation}

There are $2^{k}$ possible answer vectors $\mathbf{y_{\nu }}$ whose
probability of occurrence are given by Eq. \eqref{eq1}; they constitute the
{\it manifest probabilities} for the items $I_{1},...,I_{k}$ in the population
given by $P_{1},...,P_{N}.$ The probability vector $\left( Pr(\boldsymbol{y}%
\mathbf{_{1}}),...,Pr(\boldsymbol{y}\mathbf{_{2^{k}}})\right) $ characterizes
a latent class model (LCM) for binary data.

We will denote by $N_{\nu },\,\nu =1,...,2^{k},$ the number of times that
the sequence $\mathbf{y_{\nu }}$ appears in an $N$-sample and

\begin{equation*}
\widehat{\boldsymbol{p}}:=(N_{1}/N,...,N_{2^{k}}/N).
\end{equation*}

The likelihood function $L$ is given by

\begin{equation}
L(w_{1},...,w_{m},p_{11},...,p_{mk})=Pr(N_{1}=n_{1},...,N_{2^{k}}=n_{2^{k}})=%
{\frac{N!}{\displaystyle \prod_{\nu =1}^{2^{k}}n_{\nu }!}}\prod_{\nu =1}^{2^{k}}\Pr (%
\mathbf{y_{\nu }})^{n_{\nu }}.  \label{eq2}
\end{equation}

By $n_{\nu }$ we are denoting a realization of the random variable $N_{\nu
}, \nu =1, ..., 2^k.$ In this model the unknown parameters are $w_j, j=1,
..., m$ and $p_{ji}, j=1, ..., m, i=1, ..., k.$ These parameters can be
estimated using the maximum likelihood estimator (e.g. McHugh (1956),
Lazarsfeld \& Henry (1968), Clogg (1995)). In order to avoid the problem of
obtaining uninterpretable estimations for the item latent probabilities lying
outside the interval $[0,1],$ some authors (Lazarsfeld \& Henry (1968),
Formann (1976), Formann (1977), Formann (1978), Formann (1982), Formann
(1985)) proposed a linear-logistic parametrization for $w_j $ and $p_{ji}$ given by

\begin{equation*}
p_{ji}={\frac{exp(x_{ji})}{1+exp(x_{ji})}},\,\,j=1,...,m,\,\,i=1,...,k,
\end{equation*}%
and
\begin{equation*}
w_{j}={\frac{exp(z_{j})}{{\displaystyle \sum_{h=1}^{m}exp(z_{h})}}},\,\,j=1,...,m.
\end{equation*}

Next, restrictions are introduced relating parameters $x_{ji}, w_j$ to some
explanatory variables, defined through parameters $\lambda_r, r=1, ..., t$ and $\eta_s, s=1,
..., u$, so the final model is given by

\begin{equation}
p_{ji}={\frac{exp({\displaystyle \sum_{r=1}^{t}q_{jir}\lambda _{r}+c_{ji}})}{1+exp({\displaystyle
\sum_{r=1}^{t}q_{jir}\lambda _{r}+c_{ji}})}},\,\,j=1,...,m,\,\,i=1,...,k,
\label{A}
\end{equation}%
and
\begin{equation}
w_{j}={\frac{exp({\displaystyle \sum_{r=1}^{u}v_{jr}\eta _{r}+d_{j}})}{{\displaystyle
\sum_{h=1}^{m}exp(\sum_{r=1}^{u}v_{hr}\eta _{r}+d_{h})}}},\,\,j=1,...,m,
\label{B}
\end{equation}%
where

\begin{equation*}
\boldsymbol{Q}_{r}=(q_{jir})_{\overset{j=1,...,m}{i=1,...,k}},r=1,...,t,\,\,%
\boldsymbol{C}=(c_{ji})_{\overset{j=1,...,m}{i=1,...,k}},\,\,\boldsymbol{V}%
=(v_{jr})_{\overset{j=1,...,m}{r=1,...,u}},\,\,\boldsymbol{d}%
=(d_{j})_{j=1,...,m},
\end{equation*}%
are fixed. Matrix $\boldsymbol{Q}$ specifies to which amount the predictors
defined through parameters $\lambda _{r}$ are relevant for each $x_{ji}.$
The terms $c_{ji}$ were introduced to include the possibility that certain $%
p_{ji}$ are fixed to certain previously determined values; this possibility
was considered by Goodman (1974). The same applies for matrix $\boldsymbol{V}$:
thus, $\boldsymbol{V}$ specifies to which amount $\eta _{s}$ is relevant for
each $z_{j}.$ The terms $d_{j}$ are introduced to include the possibility
that certain $z_{j}$ are fixed to certain previously determined values.

Consequently, in this case the vector of unknown parameters $\boldsymbol{%
\theta }$ in the LCM for binary data is given by

\begin{equation*}
\boldsymbol{\theta }:=(\boldsymbol{\lambda },\boldsymbol{\eta }),
\end{equation*}%
where $\boldsymbol{\lambda }$ and $\boldsymbol{\eta }$ are defined as

\begin{equation*}
\boldsymbol{\lambda }:=(\lambda _{1},...,\lambda _{t}),\,\boldsymbol{\eta }%
:=(\eta _{1},...,\eta _{u}).
\end{equation*}%

Once $\boldsymbol{\lambda }$ and $\boldsymbol{\eta }$ are estimated, relations
(\ref{A}) and (\ref{B}) give estimations for the parameters $w_{j}$ $%
(j=1,...,m)$ and $p_{ji}$ $(j=1,...,m;$ $i=1,...,k)$

By $\boldsymbol{\Theta }$ we shall denote the set in which the parameter $
\boldsymbol{\theta }$ varies, i.e. the parametric space. Thus, we have $t+u$
unknown parameters that can be estimated by maximum likelihood through Eq. \eqref{eq2}.

In Felipe et al. (2014), a new procedure for estimating $ p_{ji},\,w_{j},\,\,i=1,...,k,\,\,j=1,...,m, $ estimating previously the parameters $\lambda _{i}, i=1,...,t$ and $\eta
_{j}, j=1,...,u$ was presented. It consists in introducing in the
context of LCM for binary data a new family of estimators based on
divergence measures: {\it Minimum $\phi $-divergence estimators} (M$\phi $E). As shown in Felipe at al. (2014), this
family of estimators contains as a particular case the classical maximum
likelihood estimator (MLE). M$\phi $E were introduced for the
first time in Morales et al. (1995) and since then, many interesting
estimation problems have been solved using them (see e.g. Pardo (2006)).

Let us briefly explain this procedure. Consider two probablity distributions ${\bf p}=(p_1, ..., p_M), {\bf q}=(q_1, ..., q_M)$ and a function $\phi $ that is convex for $x>0$ and satisfies $\phi (1)=0,0\phi
(0/0)=0$ and

\begin{equation*}
0\phi (p/0)=p\lim_{x\rightarrow \infty }{\frac{\phi (x)}{x}}.
\end{equation*}

The $\phi $-{\bf divergence measure} between the probability distributions ${\bf p}$ and ${\bf q}$ is defined by

$$ D_{\phi }({\bf p}, {\bf q}):=\sum_{i=1}^M p_i \phi \left( {q_i\over p_i}\right).$$

Given a LCM for binary data with parameters $\boldsymbol{\lambda }=(\lambda _{1},...,\lambda
_{t})$ and $\boldsymbol{\eta }=(\eta _{1},...,\eta _{u}),$ the M$\phi $E of $%
\boldsymbol{\theta }=(\boldsymbol{\lambda },\boldsymbol{\eta })$ is any $%
\hat{\boldsymbol{\theta }}_{\phi }$ satisfying

\begin{equation}
\hat{\boldsymbol{\theta }}_{\phi }=arg\min_{(\boldsymbol{\lambda },%
\boldsymbol{\eta })\in \boldsymbol{\Theta }}D_{\phi }(\widehat{\boldsymbol{p}%
},\boldsymbol{p}(\boldsymbol{\lambda },\boldsymbol{\eta }))  \label{1.1}
\end{equation}
where $D_{\phi }(\widehat{\boldsymbol{p}},\boldsymbol{p}(\boldsymbol{\lambda
},\boldsymbol{\eta }))$ is the $\phi $-divergence measure between the
probability vectors $\widehat{\boldsymbol{p}}$ and $\boldsymbol{p}(%
\boldsymbol{\lambda },\boldsymbol{\eta }),$ given by
\begin{equation}
D_{\phi }(\widehat{\boldsymbol{p}},\boldsymbol{p}(\boldsymbol{\lambda },%
\boldsymbol{\eta }))=\sum_{\nu =1}^{2^{k}}p(\boldsymbol{y}\mathbf{_{\nu }},%
\boldsymbol{\lambda },\boldsymbol{\eta })\phi \left( {\frac{\hat{p}_{\nu }}{%
p(\boldsymbol{y}\mathbf{_{\nu }},\boldsymbol{\lambda },\boldsymbol{\eta )}}}%
\right) .  \label{eq5}
\end{equation}
%

For more details about $\phi $-divergence measures see Cressie and Pardo (2002) and Pardo (2006). In the particular case of $\phi (x)=x\log x-x+1,$ we obtain the so-called {\it Kullback-Leibler divergence measure}, i.e.%
\begin{equation}
D_{Kullback}(\widehat{\boldsymbol{p}},\boldsymbol{p}(\boldsymbol{\lambda },
\boldsymbol{\eta }))=\sum_{\nu =1}^{2^{k}}\hat{p}_{\nu }\log {\frac{\hat{p}
_{\nu }}{p(\boldsymbol{y}\mathbf{_{\nu }},\boldsymbol{\lambda },\boldsymbol{
\eta )}}}.  \label{eq4}
\end{equation}

It is not difficult to establish (see Felipe et al (2014)) that
\begin{equation*}
\log L(w_{1},...,w_{m},p_{11},...,p_{mk})=-ND_{Kullback}(\widehat{%
\boldsymbol{p}},\boldsymbol{p}(\boldsymbol{\lambda },\boldsymbol{\eta }%
))+constant,
\end{equation*}


Therefore, maximizing Eq. (\ref{eq2}) in $\boldsymbol{\lambda }$ and $%
\boldsymbol{\eta }$ is equivalent to minimizing Eq. (\ref{eq4}) in $\boldsymbol{%
\lambda }$ and $\boldsymbol{\eta }$. Consequently, the value $\hat{%
\boldsymbol{\theta }}=(\hat{\boldsymbol{\lambda }},\hat{\boldsymbol{\eta }})$
that minimizes $\boldsymbol{\theta }=(\boldsymbol{\lambda },\boldsymbol{\eta
})$ in the Kullback-Leibler divergence is the MLE of the parameters for the
LCM for binary data or equivalently, the minimum Kullback-Leibler divergence estimator. We
shall denote it by $\hat{\boldsymbol{\theta }}$ or

\begin{equation*}
\hat{\boldsymbol{\theta }}_{Kullback}:=arg\min_{(\boldsymbol{\lambda },%
\boldsymbol{\eta })\in \boldsymbol{\Theta }}D_{Kullback}(\widehat{%
\boldsymbol{p}},\boldsymbol{p}(\boldsymbol{\lambda },\boldsymbol{\eta })).
\end{equation*}
This fact allows us to say that $M\phi E$ is a natural extension of the MLE.

The rest of the paper is organized as follows: In Section 2 we study the problem of goodness-of-fit when dealing with phi-divergence measures; two families of test statistics generalizing the classical ones studied in Formann (1985) are introduced and their asymptotical behavior is established. In Section 3, we proceed the same way for the problem of determining the best model in a nested sequence; besides, we also provide the asymptotical behavior of the test statistics. Section 4 is devoted to a simulation study. We finish with the conclusions. In an appendix we give the proofs of the results presented in Sections 2 and 3.

\section{Goodness-of-fit tests}

LCM for binary data fit is assessed by comparing the observed classification frequencies to
the expected frequencies predicted by the LCM for binary data. When dealing with the MLE, the difference is formally
assessed with a likelihood ratio test statistic or with a chi-square test
statistic whose expressions are given by
\begin{equation}
G^{2}=2N\sum_{\nu =1}^{2^{k}}\hat{p}_{\nu }\log \frac{\hat{p}_{\nu }}{p(%
\boldsymbol{y}\mathbf{_{\nu }},\widehat{\boldsymbol{\lambda }},\widehat{%
\boldsymbol{\eta }})}  \label{2.1}
\end{equation}%
and
\begin{equation}
X^{2}=\sum_{\nu =1}^{2^{k}}\frac{\left( n_{s}-Np(\boldsymbol{y}\mathbf{_{\nu
}},\widehat{\boldsymbol{\lambda }},\widehat{\boldsymbol{\eta }})\right) ^{2}%
}{Np(\boldsymbol{y}\mathbf{_{\nu }},\widehat{\boldsymbol{\lambda }},\widehat{%
\boldsymbol{\eta }})},  \label{2.2}
\end{equation}
respectively.

It is well-known that the asymptotic distribution of the test statistics $%
X^{2}$ and $G^{2}$ is a chi-square distribution with $2^{k}-(u+t)-1$ degrees of
freedom, see Forman (1985). It is a simple exercise to see that these test
statistics are particular cases of the more general
family of test statistics
\begin{equation}
T_{\phi }=\frac{2N}{\phi ^{\prime \prime }(1)}D_{\phi }\left( \widehat{%
\boldsymbol{p}},\boldsymbol{p}(\widehat{\boldsymbol{\lambda }}\boldsymbol{,}%
\widehat{\boldsymbol{\eta }})\right)  \label{2.20}
\end{equation}%
taking $\phi (x)=\frac{1}{2}(x-1)^{2}$ and $\phi (x)=x\log x-x+1,$
respectively. In the following we shall denote $\hat{\boldsymbol{\theta }}%
=\left( \widehat{\boldsymbol{\lambda }}\boldsymbol{,}\widehat{\boldsymbol{%
\eta }}\right) $ and we shall write $D_{\phi }\left( \widehat{\boldsymbol{p}}%
,\boldsymbol{p}(\hat{\boldsymbol{\theta }})\right) .$ Therefore, Eq. \eqref{2.20} gives a family of test statistics
for the problem of goodness-of-it to some data to a LCM. In Eq. \eqref{2.20} parameters $\bm \lambda $ and $\bm \eta $
are estimated using the MLE, but notice that MLE is a particular case of the M$\phi $E.

Based on the M$\phi $E defined in Eq. (\ref{1.1})$,$ we
shall consider in this paper the phi-divergence family of test statistics
given by
\begin{equation}
T_{\phi _{1}}^{\phi _{2}}:=\frac{2N}{\phi _{1}^{\prime \prime }(1)}D_{\phi
_{1}}\left( \widehat{\boldsymbol{p}},\boldsymbol{p}(\hat{\boldsymbol{\theta }%
}_{\phi _{2}})\right)  \label{2.3}
\end{equation}%
where $\hat{\boldsymbol{\theta }}_{\phi _{2}}=($ $\widehat{\boldsymbol{%
\lambda }}_{\phi _{2}},\widehat{\boldsymbol{\eta }}_{\phi _{2}}).$

This family of test statistics is a natural extension of the family (\ref%
{2.20}) in which the MLE has been replaced by the M$\phi _{2}E.$ Notice that in the
family presented in (\ref{2.3}) we have the possibility to use one measure of divergence
based on a function $\phi _{2}$ for the problem estimation and another
measure of divergence based on a function $\phi _{1}$ for the problem of
testing.

In the following theorem we present the asymptotic distribution of this family of test statistics.

\begin{theorem}\label{Th1}
Under the hypothesis that the LCM for binary data with parameters $\boldsymbol{\lambda =}%
(\lambda _{1},...,\lambda _{t})$ and $\boldsymbol{\eta }=(\eta _{1},...,\eta
_{u})$ holds, the asymptotic distribution of the family of test statistics $%
T_{\phi _{1}}^{\phi _{2}}$ given in (\ref{2.3}) is a chi-square
distribution wit $2^{k}-(u+t)-1$ degrees of freedom.
\end{theorem}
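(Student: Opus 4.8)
The plan is to reduce the divergence-based statistic $T_{\phi_1}^{\phi_2}$ to a Pearson-type quadratic form whose limiting law is already understood, thereby showing that the choice of $\phi_1$ is asymptotically irrelevant and that replacing the MLE by the M$\phi_2$E leaves the null distribution unchanged. Write $\boldsymbol{\theta}_0=(\boldsymbol{\lambda}_0,\boldsymbol{\eta}_0)$ for the true parameter, $\boldsymbol{p}_0=\boldsymbol{p}(\boldsymbol{\theta}_0)$, $\boldsymbol{D}_0=\mathrm{diag}(\boldsymbol{p}_0)$, $\boldsymbol{J}=\partial \boldsymbol{p}(\boldsymbol{\theta})/\partial\boldsymbol{\theta}^{T}\big|_{\boldsymbol{\theta}_0}$ (a $2^k\times(t+u)$ matrix), and $\boldsymbol{I}_F(\boldsymbol{\theta}_0)=\boldsymbol{J}^{T}\boldsymbol{D}_0^{-1}\boldsymbol{J}$ for the Fisher information. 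Under the null, the multinomial CLT gives that $\sqrt{N}(\widehat{\boldsymbol{p}}-\boldsymbol{p}_0)$ is asymptotically $N(\boldsymbol{0},\boldsymbol{D}_0-\boldsymbol{p}_0\boldsymbol{p}_0^{T})$.

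First I would carry out a second-order Taylor expansion of $\phi_1$ about $1$ inside (\ref{eq5}). Since $\phi_1(1)=0$ and the first-order term is $\phi_1'(1)\sum_\nu\big(\hat p_\nu-p_\nu(\hat{\boldsymbol{\theta}}_{\phi_2})\big)=0$ because both vectors sum to one, one obtains
\begin{equation*}
T_{\phi_1}^{\phi_2}=N\sum_{\nu=1}^{2^k}\frac{\big(\hat p_\nu-p_\nu(\hat{\boldsymbol{\theta}}_{\phi_2})\big)^2}{p_\nu(\hat{\boldsymbol{\theta}}_{\phi_2})}+o_P(1),
\end{equation*}
the factor $\phi_1''(1)$ cancelling against the normalisation in (\ref{2.3}). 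Thus $T_{\phi_1}^{\phi_2}$ is asymptotically free of $\phi_1$ and coincides, up to $o_P(1)$, with the Pearson statistic $X^2$ of (\ref{2.2}) but evaluated at the M$\phi_2$E. Justifying that the cubic remainder is $o_P(1)$ uniformly over the $2^k$ cells, using the consistency of both $\widehat{\boldsymbol{p}}$ and $\hat{\boldsymbol{\theta}}_{\phi_2}$, is the first point requiring care.

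Next I would linearise. The crucial input, established in Felipe et al. (2014), is that the M$\phi_2$E is BAN and admits the representation $\sqrt{N}(\hat{\boldsymbol{\theta}}_{\phi_2}-\boldsymbol{\theta}_0)=\boldsymbol{I}_F(\boldsymbol{\theta}_0)^{-1}\boldsymbol{J}^{T}\boldsymbol{D}_0^{-1}\sqrt{N}(\widehat{\boldsymbol{p}}-\boldsymbol{p}_0)+o_P(1)$, identical to that of the MLE. Expanding $p_\nu(\hat{\boldsymbol{\theta}}_{\phi_2})$ about $\boldsymbol{\theta}_0$ and replacing the denominator by $p_{0\nu}$ (legitimate by consistency), the statistic becomes $\big\|(\boldsymbol{I}-\boldsymbol{P})\boldsymbol{Z}\big\|^2+o_P(1)$, where $\boldsymbol{Z}=\boldsymbol{D}_0^{-1/2}\sqrt{N}(\widehat{\boldsymbol{p}}-\boldsymbol{p}_0)$, $\boldsymbol{A}=\boldsymbol{D}_0^{-1/2}\boldsymbol{J}$, and $\boldsymbol{P}=\boldsymbol{A}(\boldsymbol{A}^{T}\boldsymbol{A})^{-1}\boldsymbol{A}^{T}$ is the orthogonal projector onto the $(t+u)$-dimensional column space of $\boldsymbol{A}$. (Alternatively, one may stop here and invoke Formann (1985) directly, since $X^2$ at the MLE and at the M$\phi_2$E share this same first-order form.)

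Finally I would identify the distribution. Here $\boldsymbol{Z}\xrightarrow{L}N(\boldsymbol{0},\boldsymbol{\Sigma})$ with $\boldsymbol{\Sigma}=\boldsymbol{I}-\boldsymbol{s}\boldsymbol{s}^{T}$ and $\boldsymbol{s}=\boldsymbol{D}_0^{1/2}\boldsymbol{1}=(\sqrt{p_{0\nu}})_\nu$, $\|\boldsymbol{s}\|=1$. Because $\sum_\nu p_\nu(\boldsymbol{\theta})\equiv1$ forces $\boldsymbol{1}^{T}\boldsymbol{J}=\boldsymbol{0}$, we get $\boldsymbol{s}^{T}\boldsymbol{A}=\boldsymbol{0}$, whence $\boldsymbol{P}\boldsymbol{s}=\boldsymbol{0}$ and $\boldsymbol{s}\boldsymbol{s}^{T}$, $\boldsymbol{P}$ are mutually orthogonal projectors. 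Consequently $(\boldsymbol{I}-\boldsymbol{P})\boldsymbol{\Sigma}=\boldsymbol{I}-\boldsymbol{P}-\boldsymbol{s}\boldsymbol{s}^{T}$ is itself a projector, and a quadratic form $\boldsymbol{Z}^{T}\boldsymbol{M}\boldsymbol{Z}$ with $\boldsymbol{Z}\sim N(\boldsymbol{0},\boldsymbol{\Sigma})$ is chi-square exactly when $\boldsymbol{M}\boldsymbol{\Sigma}$ is idempotent, with degrees of freedom $\mathrm{tr}(\boldsymbol{M}\boldsymbol{\Sigma})$. Taking $\boldsymbol{M}=\boldsymbol{I}-\boldsymbol{P}$ gives rank $2^k-(t+u)-1$: the ambient dimension $2^k$, less the $t+u$ directions absorbed by $\boldsymbol{P}$ and the single further direction $\boldsymbol{s}$ arising from the unit-sum constraint. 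This yields the claimed $\chi^2_{2^k-(u+t)-1}$ limit. I expect the matrix bookkeeping that produces exactly this rank — in particular the orthogonality $\boldsymbol{P}\boldsymbol{s}=\boldsymbol{0}$ responsible for the ``$-1$'' — together with the uniform control of the Taylor remainder to be the main technical hurdles, the BAN representation itself being imported from the cited work.
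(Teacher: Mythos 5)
Your proposal is correct and follows essentially the same route as the paper's proof: a second-order Taylor expansion reducing $T_{\phi_1}^{\phi_2}$ to a Pearson-type quadratic form, the asymptotic linear (BAN) representation of $\hat{\boldsymbol{\theta}}_{\phi_2}$ imported from Felipe et al. (2014) (your $\boldsymbol{I}_F^{-1}\boldsymbol{J}^T\boldsymbol{D}_0^{-1}$ is exactly the paper's $(\boldsymbol{L}^T\boldsymbol{L})^{-1}\boldsymbol{L}^T\boldsymbol{D}_{\boldsymbol{p}(\boldsymbol{\theta}_0)}^{-1/2}$), the multinomial CLT, and an idempotency-plus-trace argument for the limiting quadratic form. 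The only difference is one of completeness in your favor: where the paper dismisses the verification that the relevant matrix is idempotent with trace $2^k-(u+t)-1$ as a "long and tedious exercise" citing Pardo (2006), you carry it out explicitly via $\boldsymbol{1}^T\boldsymbol{J}=\boldsymbol{0}$, hence $\boldsymbol{P}\boldsymbol{s}=\boldsymbol{0}$, so that $\boldsymbol{I}-\boldsymbol{P}-\boldsymbol{s}\boldsymbol{s}^T$ is a projector of the stated rank.
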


\begin{proof}
See Appendix
\end{proof}

It is noteworthy that the asymptotical distribution does not depend on $\phi ,$ i.e. it is the same for any function $\phi $ considered.

Let us see an example:

\begin{example}
We consider the interview data collected by Coleman (1964) and analized later in Goodman (1974); this model is explained in Formann (1982) and Formann (1985). The experiment consists in evaluating the answers of 3398 schoolboys to two questions about their membership in the ``leading crowd" on two occasions $t_1$ and $t_2$ (October, 1957 and May, 1958). Thus, in this model we have 4 questions and there are four manifest variables (answers to both questions at both moments); these answers can only be ``low" (value 0) and ``high" (value 1), so that the manifest variables are dichotomous. The sample data is given in next table:

\begin{center}
\begin{tabular}{c|cccc}
October, 1957/ May, 1958 & 00 & 01 & 10 & 11 \\
\hline 00                & 554 & 338 & 97 & 85 \\
01                       & 281 & 531 & 75 & 184 \\
10                       & 87  & 56  & 182 & 171 \\
11                       & 49   & 110 & 140 & 458 \\
\end{tabular}
\end{center}

Next, 4 latent classes are considered, namely

$C_1\equiv $ low agreement in question 1 and low agreement in question 2.

$C_2\equiv $ low agreement in question 1 and high agreement in question 2.

$C_3\equiv $ high agreement in question 1 and low agreement in question 2.

$C_4\equiv $ high agreement in question 1 and high agreement in question 2.

There are 16 probability values $p_{ji}$ to be estimated; we consider the first hypothesis appearing in Formann (1985), namely ``The attitudinal changes between times $t_1$ and $t_2$ are dependent on the positions (low, high) of the respective classes on the underlying attitudinal scales at $t_1$". Thus, a model with 8 parameters $\lambda_i$ is considered; $\lambda_1$ means low agreement in the first question at time $t_1$, $\lambda_2$ means high agreement in the first question at time $t_1$, $\lambda_3$ means low agreement in the second question at time $t_1$, $\lambda_4$ means high agreement in the second question at time $t_1$, and $\lambda_5, \lambda_6, \lambda_7, \lambda_8$ are the same parameters at time $t_2.$ We write the values for matrices ${\bf Q}_i$ as they appear in Formann (1985). In our notation, the matrices ${\bf Q}_i$ can be derived considering the $i$-th column in the table and dividing it in four columns of four elements each (each corresponding to a latent class).

\begin{center}
\begin{tabular}{c|c|cccccccc}
Class & Item & $\lambda_1$ &  $\lambda_2$ & $\lambda_3$ & $\lambda_4$ & $\lambda_5$ & $\lambda_6$ & $\lambda_7$ & $\lambda_8$ \\ \hline 1 & 1 & 1 & 0 & 0 & 0 & 0 & 0 & 0 & 0 \\ & 2 & 0 & 0 & 1 & 0 & 0 & 0 & 0 & 0 \\ & 3 & 0 & 0 & 0 & 0 & 1 & 0 & 0 & 0 \\ & 4 & 0 & 0 & 0 & 0 & 0 & 0 & 1 & 0 \\ \hline 2 & 1 & 1 & 0 & 0 & 0 & 0 & 0 & 0 & 0 \\ & 2 & 0 & 0 & 0 & 1 & 0 & 0 & 0 & 0 \\ & 3 & 0 & 0 & 0 & 0 & 1 & 0 & 0 & 0 \\ & 4 & 0 & 0 & 0 & 0 & 0 & 0 & 0 & 1 \\ \hline 3 & 1 & 0 & 1 & 0 & 0 & 0 & 0 & 0 & 0 \\ & 2 & 0 & 0 & 1 & 0 & 0 & 0 & 0 & 0 \\ & 3 & 0 & 0 & 0 & 0 & 0 & 1 & 0 & 0 \\ & 4 & 0 & 0 & 0 & 0 & 0 & 0 & 1 & 0 \\ \hline 4 & 1 & 0 & 1 & 0 & 0 & 0 & 0 & 0 & 0 \\ & 2 & 0 & 0 & 0 & 1 & 0 & 0 & 0 & 0 \\ & 3 & 0 & 0 & 0 & 0 & 0 & 1 & 0 & 0 \\ & 4 & 0 & 0 & 0 & 0 & 0 & 0 & 0 & 1 \\ \hline
\end{tabular}
\end{center}

Note that the hypothesis is that the attitudinal changes between times $t_1$ and $t_2$ are dependent upon the items as well as on the classes. For this reason, the part corresponding each latent class can be partitioned in four submatrices of size 2$\times $4. The submatrices lying on the main diagonal are the same by the hypothesis defining the model and the two other submatrices are null. The differences among them are due to the differences in the latent classes. Next, $c_{ij}=0,\, \, \forall i, j$ (as we have explained when values $c_{ij}$ were introduced in Section 1). Finally, 4 parameters $\eta_j$ are considered, taking as matrix ${\mathbf V}$ the identity matrix and $d_j=0,\, \forall j.$

It is noteworthy that our model assumes that answers to the questions are
conditionally independent given the latent class. In this example, we are dealing with repeated
responses to two questions, so this assumption may be unrealistic. However, this assumption is made in the original paper of Goodman (1974) and we follow this assumption for the sake of the example.

In order to study if the data are from a LCM for binary data we shall consider the
particular family of phi-divergence measures introduced and studied by
Cressie and Read (1984): The power divergence family. This family is obtained from
\begin{equation}
\phi (x)\equiv \phi _{a}(x)=\left\{
\begin{array}{cl}
{\frac{1}{a(a+1)}}(x^{a+1}-x-a(x-1)) & a\neq 0,a\neq -1 \\
x\log x-x+1 & a=0 \\
-\log x+x-1 & a=-1%
\end{array}%
\right.   \label{eq9}
\end{equation}%
In Felipe et al. (2014) it was established, on the basis of a simulation study,
that a good alternative to the MLE is the M$\phi $E obtained from Eq. (\ref{1.1}) with $a=2/3$, i.e.,
\begin{equation*}
\hat{\boldsymbol{\theta }}_{2/3}=arg\min_{(\boldsymbol{\lambda },\boldsymbol{%
\eta })\in \boldsymbol{\Theta }}D_{2/3}(\hat{\boldsymbol{p}},\boldsymbol{p}(%
\boldsymbol{\lambda },\boldsymbol{\eta })),
\end{equation*}%
being
\begin{equation*}
D_{2/3}(\hat{\boldsymbol{p}},\boldsymbol{p}(\boldsymbol{\lambda },%
\boldsymbol{\eta }))={\frac{9}{10}}\left( {\sum_{j=1}^{2^{k}}\frac{\hat{p}%
_{j}^{5/3}}{p_{j}(\boldsymbol{\lambda },\boldsymbol{\eta })^{2/3}}}-1\right)
\end{equation*}%
Therefore we are going to consider in our study the M$\phi $E obtained with $%
\phi \left( x\right) $ defined in Eq. (\ref{eq9}) for $a=2/3$ in order to get
an estimation of parameters $\boldsymbol{\lambda }$ and $\boldsymbol{\eta .}$
In Table 1 we present the values obtained for these parameters, as well as the estimation of the probabilities and the weights of the latent classes

\begin{table}[h]
\begin{center}
\begin{tabular}{|c|r|r|c|}
\hline
Parameter / a &  & Parameter / a &  \\ \hline
$\hat{\lambda}_{1}$ & -2.34292610 & \multicolumn{1}{|c|}{$\hat{p}_{1,1}$} &
0.08762969 \\
$\hat{\lambda}_{2}$ & 1.72393168 & \multicolumn{1}{|c|}{$\hat{p}_{1,2}$} &
0.30144933 \\
$\hat{\lambda}_{3}$ & -0.84040580 & \multicolumn{1}{|c|}{$\hat{p}_{1,3}$} &
0.11256540 \\
$\hat{\lambda}_{4}$ & 1.56524945 & \multicolumn{1}{|c|}{$\hat{p}_{1,4}$} &
0.28671773 \\
$\hat{\lambda}_{5}$ & -2.06480043 & \multicolumn{1}{|c|}{$\hat{p}_{2,1}$} &
0.08762969 \\
$\hat{\lambda}_{6}$ & 2.29928080 & \multicolumn{1}{|c|}{$\hat{p}_{2,2}$} &
0.82710532 \\
$\hat{\lambda}_{7}$ & -0.91137901 & \multicolumn{1}{|c|}{$\hat{p}_{2,3}$} &
0.11256540 \\
$\hat{\lambda}_{8}$ & 2.01252338 & \multicolumn{1}{|c|}{$\hat{p}_{2,4}$} &
0.88210569 \\
$\hat{\eta}_{1}$ & 0.50480183 & \multicolumn{1}{|c|}{$\hat{p}_{3,1}$} &
0.8463457 \\
$\hat{\eta}_{2}$ & 0.16964329 & \multicolumn{1}{|c|}{$\hat{p}_{3,2}$} &
0.30144933 \\
$\hat{\eta}_{3}$ & -0.87356633 & \multicolumn{1}{|c|}{$\hat{p}_{3,3}$} &
0.90881746 \\
$\hat{\eta}_{4}$ & -0.00424661 & \multicolumn{1}{|c|}{$\hat{p}_{3,4}$} &
0.28671773 \\
$\hat{w}_{1}$ & 0.38936544 & \multicolumn{1}{|c|}{$\hat{p}_{4,1}$} &
0.84863457 \\
$\hat{w}_{2}$ & 0.27848377 & \multicolumn{1}{|c|}{$\hat{p}_{4,2}$} &
0.82710532 \\
$\hat{w}_{3}$ & 0.09811597 & \multicolumn{1}{|c|}{$\hat{p}_{4,3}$} &
0.90881746 \\
$\hat{w}_{4}$ & 0.23403482 & \multicolumn{1}{|c|}{$\hat{p}_{4,4}$} &
0.88210569 \\ \hline
\end{tabular}%
\caption{Estimations of the parameters for Example 4.}
\end{center}
\end{table}

Now we are interested in studying the goodness-of-fit of our data to this model.
We shall consider the family of test statistics, $T_{\phi _{a}}^{\phi
_{2/3}},$ obtained from $\phi _{a}(x)$ with $a=-1,-1/2,$ $0,$ $2/3,$ $%
1,1.5,2,2.5$ and $3,$i.e$.,$
\begin{equation}\label{pag7}
T_{\phi _{a}}^{\phi _{2/3}}=\left\{
\begin{array}{cc}
\frac{2N}{a(a+1)}\left( \sum\limits_{\nu =1}^{2^{k}}\frac{\widehat{p}%
_{\upsilon }^{a+1}}{p\left( \boldsymbol{y}_{\nu },\hat{\boldsymbol{\theta }}%
_{2/3}\right) ^{a}}-1\right)  & \text{if }a\neq 0,-1 \\
2\sum\limits_{\nu =1}^{2^{k}}n_{\upsilon }\log \frac{\widehat{p}_{\upsilon }%
}{p\left( \boldsymbol{y}_{\nu },\hat{\boldsymbol{\theta }}_{2/3}\right) } &
a=0 \\
2N\sum\limits_{\nu =1}^{2^{k}}p\left( \boldsymbol{y}_{\nu },\hat{\boldsymbol{%
\theta }}_{2/3}\right) \log \frac{p\left( \boldsymbol{y}_{\nu },\hat{%
\boldsymbol{\theta }}_{2/3}\right) }{\widehat{p}_{\upsilon }} & a=-1%
\end{array}%
\right. .
\end{equation}%

The results are presented in the following table%

\begin{equation*}
\begin{tabular}{l|ccccccccc}
a & -1 & -1/2 & 0 & 2/3 & 1 & 3/2 & 2 & 5/2 & 3 \\ \hline
$T_{\phi _{a}}^{\phi _{2/3}}$ & 1.279 & 1.278 & 1.277 & 1.277 & 1.277 &
1.277 & 1.278 & 1.279 & 1.281%
\end{tabular}%
\end{equation*}%

On the other hand, the distribution of this statistics is a $\chi^2$ with 16-11-1=4 degrees of freedom; as $\chi _{4; 0.05}^{2}=$ 9.49, we conclude that we have no  evidence to reject our model.

Notice that the values for all test statistics are very similar; this was expected, as the sample size under consideration is big enough ($N=3398$) to apply the asymptotical result of Theorem 1.
\end{example}

\begin{remark}
There are some classical measures of divergence which cannot be expressed as a $\phi $%
-divergence measure, such as the divergence measures of Bhattacharya (1943), R%
\'{e}nyi (1961), and Sharma and Mittal (1977). However, such measures are particular cases of the $(h,\phi )$-divergence measures and can be
defined by
\[
D_{\phi_1 }^{h}\left( \widehat{\boldsymbol{p}},\boldsymbol{p}\left( \widehat{%
\boldsymbol{\lambda }}_{\phi_2}\boldsymbol{,}\widehat{\boldsymbol{\eta }}_{\phi_2}\right)
\right) :=h\left( D_{\phi_1 }\left( \widehat{\boldsymbol{p}},\boldsymbol{p}%
\left( \widehat{\boldsymbol{\lambda }}_{\phi_2}\boldsymbol{,}\widehat{\boldsymbol{%
\eta }}_{\phi_2}\right) \right) \right) ,
\]
where $h$ is a differentiable increasing function mapping from $\left[
0,\infty \right) $ onto $\left[ 0,\infty \right) $, with $h(0)=0$ and $
h^{\prime }(0)>0$. In Table \ref{t1}, these divergence
measures are presented, along with the corresponding expressions of $h$ and $\phi $.

\begin{table}[htbp] \tabcolsep0.8pt  \centering%
$%
\begin{tabular}{ccccc}
\hline
Divergence & \hspace*{0.5cm} & $h\left( x\right) $ & \hspace*{0.5cm} & $\phi
\left( x\right) $ \\ \hline
\multicolumn{1}{l}{R\'{e}nyi} &  & \multicolumn{1}{l}{$\frac{1}{a\left(
a-1\right) }\log \left( a\left( a-1\right) x+1\right) ,\quad a\neq 0,1$} &
& \multicolumn{1}{l}{$\frac{x^{a}-a\left( x-1\right) -1}{a\left( a-1\right) }%
,\quad a\neq 0,1$} \\
\multicolumn{1}{l}{Sharma-Mittal} &  & \multicolumn{1}{l}{$\frac{1}{b-1}%
\left\{ [1+a\left( a-1\right) x]^{\frac{b-1}{a-1}}-1\right\} ,\quad b,a\neq
1 $} &  & \multicolumn{1}{l}{$\frac{x^{a}-a\left( x-1\right) -1}{a\left(
a-1\right) },\quad a\neq 0,1$} \\
\multicolumn{1}{l}{Battacharya} &  & \multicolumn{1}{l}{$-\log \left(
-x+1\right) $} &  & \multicolumn{1}{l}{$-x^{1/2}+\frac{1}{2}\left(
x+1\right) $} \\ \hline
\end{tabular}%
\ \ \ \ \ \ \ \ \ \ \ \ \ \ \ $%
\caption{Some specific $(h,\phi)$-divergence
measures.\label{t1}}%
\end{table}%
The $(h,\phi )$-divergence measures were introduced in Men\'{e}ndez et al.
(1995) and some associated asymptotic results for them were established in Men\'{e}%
ndez et al. (1997). Moreover, some interesting results about R\'{e}nyi divergence
measures can be seen in Gil et al. (2013), Golshani et al. (2009, 2010) and
Nadarajah and Zografos (2003).
\end{remark}

If we deal with $(h, \phi )$-divergence measures in our context, the following can be proved:

\begin{theorem}
Under the assumptions of Theorem 1, the asymptotic distribution of the
family of empirical test statistics defined by
\[
S^{\phi_1 ,h}\left( \widehat{\boldsymbol{p}},\boldsymbol{p}\left( \widehat{%
\boldsymbol{\lambda }}_{\phi_2}\boldsymbol{,}\widehat{\boldsymbol{\eta }}_{\phi_2}\right)
\right) :=\frac{2N}{\phi ^{\prime \prime }_1(1)h^{\prime }(0)}h\left( D_{\phi_1
}\left( \widehat{\boldsymbol{p}},\boldsymbol{p}\left( \widehat{\boldsymbol{%
\lambda }}_{\phi_2}\boldsymbol{,}\widehat{\boldsymbol{\eta }}_{\phi_2}\right) \right) \right)
\]%
is chi-square with $2^{k}-(u+t)-1$ degrees of freedom.
\end{theorem}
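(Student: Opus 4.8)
The plan is to deduce Theorem~2 from Theorem~1 by a first-order Taylor expansion of $h$ at the origin, exploiting that $h$ is essentially linear near $0$ and that the divergence being transformed collapses to zero. First I would record a consistency fact: under the null model,
$$D_{\phi_1}\bigl(\widehat{\boldsymbol{p}},\boldsymbol{p}(\hat{\boldsymbol{\theta}}_{\phi_2})\bigr)\longrightarrow 0 \quad\text{in probability.}$$
This holds because $\widehat{\boldsymbol{p}}$ converges in probability to the true manifest vector $\boldsymbol{p}(\boldsymbol{\theta}_0)$ by the law of large numbers, the M$\phi_2$E is consistent so that $\boldsymbol{p}(\hat{\boldsymbol{\theta}}_{\phi_2})\to\boldsymbol{p}(\boldsymbol{\theta}_0)$, and $D_{\phi_1}$ is continuous and vanishes when its two arguments coincide. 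Throughout I abbreviate $D_{\phi_1}:=D_{\phi_1}(\widehat{\boldsymbol{p}},\boldsymbol{p}(\hat{\boldsymbol{\theta}}_{\phi_2}))$.

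Next I would invoke Theorem~1 in a quantitative form. It asserts that $T_{\phi_1}^{\phi_2}=\frac{2N}{\phi_1''(1)}D_{\phi_1}$ converges in law to a chi-square with $2^k-(u+t)-1$ degrees of freedom; in particular this forces $N\,D_{\phi_1}=O_P(1)$, that is, $D_{\phi_1}$ is of exact order $1/N$. I would then perform the expansion. Since $h$ is differentiable at $0$ with $h(0)=0$ and $h'(0)>0$, we may write $h(x)=h'(0)x+x\,\varepsilon(x)$ with $\varepsilon(x)\to 0$ as $x\downarrow 0$. Substituting $x=D_{\phi_1}$ and multiplying by the normalizing constant $\frac{2N}{\phi_1''(1)h'(0)}$ gives
$$S^{\phi_1,h}=\frac{2N}{\phi_1''(1)}D_{\phi_1}+\frac{2}{\phi_1''(1)h'(0)}\,\bigl(N D_{\phi_1}\bigr)\,\varepsilon(D_{\phi_1})=T_{\phi_1}^{\phi_2}+R_N.$$
Here the leading term is exactly the statistic of Theorem~1, and the factor $h'(0)$ in the definition of $S^{\phi_1,h}$ was chosen precisely to cancel the leading coefficient $h'(0)$ of the expansion.

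To close the argument I would control the remainder $R_N$. In it, $N D_{\phi_1}=O_P(1)$ by Theorem~1, while $\varepsilon(D_{\phi_1})=o_P(1)$ by the consistency established in the first step; a product of an $O_P(1)$ factor with an $o_P(1)$ factor is $o_P(1)$, so $R_N\to 0$ in probability. An application of Slutsky's theorem then shows that $S^{\phi_1,h}$ inherits the limiting law of $T_{\phi_1}^{\phi_2}$, namely chi-square with $2^k-(u+t)-1$ degrees of freedom.

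The only genuinely delicate point is the control of $R_N$: one must be certain that the $o(x)$ error in the Taylor expansion, once inflated by the factor $N$, still vanishes in probability. This is exactly where the strengthened reading of Theorem~1 --- that $N D_{\phi_1}$ is \emph{bounded} in probability, not merely convergent in distribution --- is indispensable. Everything else is the routine composition of a smooth transform with an already-established asymptotic result, and the degrees of freedom are transferred unchanged.
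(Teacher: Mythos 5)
Your proposal is correct and follows essentially the same route as the paper: a first-order Taylor expansion $h(x)=h'(0)x+o(x)$ at the origin, cancellation of $h'(0)$ by the normalizing constant, and reduction to Theorem 1. The paper states this in two lines and leaves the remainder control implicit; your added steps (consistency of $D_{\phi_1}$, tightness $N D_{\phi_1}=O_P(1)$ from Theorem 1, and Slutsky) are exactly the details needed to make that terse argument rigorous, not a different method.
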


\begin{proof}
We have%
\[
h(x)=h(0)+h^{\prime }(0)x+o(x)= h^{\prime }(0)x+o(x),
\]%
and so
\[
h\left( D_{\phi_1 }\left( \left( \widehat{\boldsymbol{p}},\boldsymbol{p}\left(
\widehat{\boldsymbol{\lambda }}_{\phi_2}\boldsymbol{,}\widehat{\boldsymbol{\eta }}_{\phi_2}%
\right) \right) \right) \right) =h^{\prime }(0)D_{\phi_1 }\left( \widehat{%
\boldsymbol{p}},\boldsymbol{p}\left( \widehat{\boldsymbol{\lambda }}_{\phi_2}%
\boldsymbol{,}\widehat{\boldsymbol{\eta }}_{\phi_2}\right) \right) +o\left( D_{\phi_1}\left( \widehat{\boldsymbol{p}},\boldsymbol{p}\left( \widehat{\boldsymbol{%
\lambda }}_{\phi_2}\boldsymbol{,}\widehat{\boldsymbol{\eta }}_{\phi_2}\right) \right) \right)
.
\]%
Then, the required result follows upon applying Theorem \ref{Th1}.
\end{proof}

\section{Nested latent class models}

In the previous example we have seen that the LCM proposed (that we will call $M_{1}$) fits our data; however, a question arises: Is it possible to find a
latent model with a reduced number of parameters that also fits the data? If the
answer is positive, the reduced model should be used instead of $M_1$.

\begin{example}
Consider the example studied in the previous section. In Formann (1985), the following reduced models are studied:

$M_{2}:$ Attitudinal changes between the two moments are dependent on the
latent classes but are independent on the items.


$M_{3}$: Attitudinal changes between the two moments are independent both on
the items and on the latent classes.

$M_{4}:$ There are no attitudinal changes.

These different models imply different number of parameters $\lambda _{i}.$
More concretely, model $M_{2}$ needs six parameters $\lambda _{i}
$, model $M_{3}$ needs five parameters and finally model $M_{4}$ needs four
parameters. The corresponding matrices $Q_{i}$ for these models can be found in Formann
(1985).

As for $M_1$, $c_{ij}=0,\,\,\forall i,j$ and 4 parameters $\eta _{j}$ are
considered, taking matrix $V$ as the identity matrix and $d_{j}=0,\,\forall j.$

We can observe that
\begin{equation*}
\bm \Theta _{M_{1}}\supset \bm \Theta _{M_{2}}\supset \bm \Theta _{M_{3}}\supset \bm \Theta
_{M_{4}},
\end{equation*}%
being $\bm \Theta _{M_{i}}$ the parameter space associated to the LCM $%
M_{i}.$
Therefore, we have a nested sequence of LCM.
\end{example}

In general, we shall assume that we have $m$ LCM $\left\{ M_{l}\right\}
_{l=1,...,m}$ in such a way that the parameter space associated to $M_{l},$ $%
l=1,...,m,$ is $\bm \Theta _{M_{l}}$ and
\begin{equation*}
\bm \Theta _{M_{m}}\subset \bm \Theta _{M_{m-1}}\subset ....\subset \bm \Theta
_{M_{1}}\subset \mathbb{R}^{t}
\end{equation*}%
holds. Let us denote $\dim \left( \bm \Theta _{M_{l}}\right) =h_{l};$ $l=1,....,m$ with
\begin{equation*}
h_{m}<h_{m-1}<....<h_{1}\leq t,
\end{equation*}%
i.e., the parameters of one LCM are a subset of the
parameters of the other.
Our strategy is to test successively
\begin{equation}
H_{l+1}:\boldsymbol{\theta \in }\bm \Theta _{M_{l+1}}\text{ against }H_{l}:%
\boldsymbol{\theta \in }\bm \Theta _{M_{l}},\text{ }l=1,...,m-1.  \label{Nested}
\end{equation}%

We continue to test as long as the null hypothesis is accepted, and choose
the LCM $M_{l}$ with parameter space $\bm \Theta _{M_{l}}$ according to
the first $l$ satisfying that $H_{l+1}$ is rejected (as null hypothesis) in favor
of $H_{l}$ (as alternative hypothesis). This strategy is quite standard
for nested models (Cressie et al., 2003). In this section we present two families of phi-divergence test statistics for solving the tests presented in \eqref{Nested}.


%
%
%

Let us introduce some additional notation in order to be able to
formulate the nested LCM in a convenient way for our purposes. We shall denote by $\boldsymbol{\theta }^{A}=\left( \boldsymbol{\theta }%
^{A,1},\boldsymbol{\theta }^{A,2},\boldsymbol{\theta }^{A,3},\boldsymbol{%
\theta }^{A,4}\right) $ with $\boldsymbol{\theta }^{A,1}=\left( \lambda
_{1},...,\lambda _{t^{\ast }}\right) ,$ $\boldsymbol{\theta }^{A,2}=\left(
\lambda _{t^{\ast }+1},...,\lambda _{t}\right) ,$ $\boldsymbol{\theta }%
^{A,3}=\left( \eta _{1},...,\eta _{u^{\ast }}\right) $ and $\boldsymbol{%
\theta }^{A,4}=$ $\left( \eta _{u^{\ast }+1},...,\eta _{u}\right) $ the
parameters associated to the LCM $A$ and by $\boldsymbol{\theta }%
^{B}=\left( \boldsymbol{\theta }^{A,1},\boldsymbol{0},\boldsymbol{\theta }%
^{A,3},\boldsymbol{0}\right) $ the parameters associated to the LCM $B.$ We
shall assume that $t+u=h_{1}$ and $t^{\ast }+u^{\ast }=h_{2}$ . It is clear
that the LCM $B$ is nested in LCM $A.$

It can be observed that the testing problem given in \eqref{Nested} can be  equivalently formulated using the previous notation in the following way:

\begin{equation}
H_{Null}:\boldsymbol{\theta }^{A,2}=\boldsymbol{0}_{t-t^{\ast }}\text{ and }%
\boldsymbol{\theta }^{A,4}=\boldsymbol{0}_{u-u^{\ast }}.  \label{4.0}
\end{equation}%

The expression of the classical likelihood ratio test for solving \eqref{4.0}
is
\begin{equation}
G_{A-B}^{2}=2\sum_{\nu =1}^{2^{k}}n_{\upsilon }\log \frac{p\left(
\boldsymbol{y}_{\nu },\widehat{\boldsymbol{\theta }}^{A}\right) }{p\left(
\boldsymbol{y}_{\nu },\widehat{\boldsymbol{\theta }}^{B}\right) }.
\label{4.1}
\end{equation}%

Notice that not only the likelihood ratio test can be used for testing \eqref{4.0}; the chi-square test statistic given by
\begin{equation}
X_{A-B}^{2}=N\sum_{\nu =1}^{2^{k}}\frac{\left( p\left( \boldsymbol{y}_{\nu },%
\widehat{\boldsymbol{\theta }}^{A}\right) -p\left( \boldsymbol{y}_{\nu },%
\widehat{\boldsymbol{\theta }}^{B}\right) \right) ^{2}}{p\left( \boldsymbol{y%
}_{\nu },\widehat{\boldsymbol{\theta }}^{B}\right) }  \label{4.2}
\end{equation}%
can be also used instead.

We can observe that
\begin{equation}
G_{A-B}^{2}=2N\left( D_{Kullback}\left( \hat{\boldsymbol{p}},\boldsymbol{p}(%
\widehat{\boldsymbol{\theta }}^{A})\right) -D_{Kullback}\left( \hat{%
\boldsymbol{p}},\boldsymbol{p}(\widehat{\boldsymbol{\theta }}^{B})\right)
\right)  \label{4.3}
\end{equation}%
and
\begin{equation}
X_{A-B}^{2}=\frac{2N}{\phi '' \left( 1\right) }D_{Pearson}(\boldsymbol{p}(\widehat{\boldsymbol{\theta }}%
^{A}),\boldsymbol{p}(\widehat{\boldsymbol{\theta }}^{B}))  \label{4.4}
\end{equation}%
being $D_{Pearson}(\boldsymbol{p}(\widehat{\boldsymbol{\theta }}^{A}),%
\boldsymbol{p}(\widehat{\boldsymbol{\theta }}^{B}))$ the phi-divergence
measure between the probability vectors $\boldsymbol{p}(\widehat{\boldsymbol{%
\theta }}^{A})$ and $\boldsymbol{p}(\widehat{\boldsymbol{\theta }}^{B})$
with
\begin{equation*}
\phi \left( x\right) =\frac{1}{2}\left( x-1\right) ^{2}.
\end{equation*}%
Based on Eqs. (\ref{4.3}) and (\ref{4.4}) we are going to give two families of test statistics that are
natural extensions of these test statistics for solving the problem of testing given in \eqref{Nested}.

A generalization of (\ref{4.3}) is obtained if we replace the
Kullback-Leibler divergence measure for a phi-divergence measure, i.e.,
\begin{equation}
S_{A-B}^{\phi _{1,}\phi _{2}}=\frac{2N}{\phi _{1}^{\prime \prime }\left(
1\right) }\left( D_{\phi _{1}}\left( \hat{\boldsymbol{p}},\boldsymbol{p}(%
\widehat{\boldsymbol{\theta }}_{\phi _{2}}^{A})\right) -D_{\phi _{1}}\left(
\hat{\boldsymbol{p}},\boldsymbol{p}(\widehat{\boldsymbol{\theta }}_{\phi
_{2}}^{B})\right) \right)  ,\label{4.5}
\end{equation}%
and a generalization of (\ref{4.4}) is achieved if we replace the Pearson divergence
measure for a phi-divergence measure, i.e.,
\begin{equation}
T_{A-B}^{\phi _{1,}\phi _{2}}=\frac{2N}{\phi _{1}^{\prime \prime }\left(
1\right) }D_{\phi _{1}}\left( \boldsymbol{p}(\widehat{\boldsymbol{\theta }}%
_{\phi _{2}}^{A}),\boldsymbol{p}(\widehat{\boldsymbol{\theta }}_{\phi
_{2}}^{B})\right) .  \label{4.6}
\end{equation}%

The previous extensions have been considered in many statistical applications, see for example Cressie et al. (2003), Pardo (2006) and references therein.

In the following theorem we shall obtain the asymptotic distribution of the
family of test statistics given in (\ref{4.5}) and (\ref{4.6}).

\begin{theorem}
Given the LCM for binary data $A$ and $B$ with parameters $\boldsymbol{\theta }^{A}=\left(
\boldsymbol{\theta }^{A,1},\boldsymbol{\theta }^{A,2},\boldsymbol{\theta }%
^{A,3},\boldsymbol{\theta }^{A,4}\right) $ and $\boldsymbol{\theta }%
^{B}=\left( \boldsymbol{\theta }^{A,1},\boldsymbol{0},\boldsymbol{\theta }%
^{A,3},\boldsymbol{0}\right) $, respectively, and under the null hypothesis
given in (\ref{4.0}), it follows
\begin{equation*}\label{4.5}
S_{A-B}^{\phi _{1,}\phi _{2}}\overset{\mathcal{L}}{\underset{%
N\longrightarrow \infty }{\mathcal{\rightarrow }}}\chi _{h_{1}-h_{2}}^{2},
\end{equation*}%
and
\begin{equation*}\label{4.6}
T_{A-B}^{\phi _{1,}\phi _{2}}\overset{\mathcal{L}}{\underset{%
N\longrightarrow \infty }{\mathcal{\rightarrow }}}\chi _{h_{1}-h_{2}}^{2}.
\end{equation*}
\end{theorem}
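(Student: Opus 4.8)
The plan is to reduce both test statistics to quadratic forms in the minimum $\phi_2$-divergence estimators and to exploit the nested structure of the parameter spaces, mimicking the classical likelihood-ratio decomposition that underlies \eqref{4.3} and \eqref{4.4}. First I would recall from Felipe et al. (2014) the asymptotic linearity of the M$\phi_2$E: under the null hypothesis, both $\widehat{\boldsymbol{\theta}}_{\phi_2}^A$ and $\widehat{\boldsymbol{\theta}}_{\phi_2}^B$ are asymptotically normal and admit first-order expansions around the true parameter $\boldsymbol{\theta}^B$, with the same influence function as the MLE up to the factor $\phi_2''(1)$ (indeed the M$\phi$E and the MLE are first-order equivalent, so $\sqrt{N}(\widehat{\boldsymbol{\theta}}_{\phi_2}-\boldsymbol{\theta})$ has the standard information-based limiting covariance). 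Using the delta method on $\boldsymbol{p}(\cdot)$ then gives $\sqrt{N}(\boldsymbol{p}(\widehat{\boldsymbol{\theta}}_{\phi_2}^A)-\boldsymbol{p}(\boldsymbol{\theta}^B))$ and $\sqrt{N}(\boldsymbol{p}(\widehat{\boldsymbol{\theta}}_{\phi_2}^B)-\boldsymbol{p}(\boldsymbol{\theta}^B))$ as jointly normal vectors whose covariance structure is governed by the Jacobian matrices $\boldsymbol{J}_A$ and $\boldsymbol{J}_B$ of the model probabilities with respect to the two parameter sets.

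Next I would handle the two statistics in turn. For $T_{A-B}^{\phi_1,\phi_2}$ the strategy is a second-order Taylor expansion of the $\phi_1$-divergence: since $\boldsymbol{p}(\widehat{\boldsymbol{\theta}}_{\phi_2}^A)$ and $\boldsymbol{p}(\widehat{\boldsymbol{\theta}}_{\phi_2}^B)$ both converge to the same limit $\boldsymbol{p}(\boldsymbol{\theta}^B)$ under $H_{Null}$, the leading term of $D_{\phi_1}(\boldsymbol{p}(\widehat{\boldsymbol{\theta}}_{\phi_2}^A),\boldsymbol{p}(\widehat{\boldsymbol{\theta}}_{\phi_2}^B))$ is $\tfrac{\phi_1''(1)}{2}$ times a weighted squared distance between the two probability vectors, so that $T_{A-B}^{\phi_1,\phi_2}$ is asymptotically a quadratic form $N\,\boldsymbol{Z}^{\top}\boldsymbol{M}\boldsymbol{Z}$ in the normal vector $\boldsymbol{Z}=\sqrt{N}(\boldsymbol{p}(\widehat{\boldsymbol{\theta}}_{\phi_2}^A)-\boldsymbol{p}(\widehat{\boldsymbol{\theta}}_{\phi_2}^B))$. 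For $S_{A-B}^{\phi_1,\phi_2}$ I would instead expand each of $D_{\phi_1}(\hat{\boldsymbol{p}},\boldsymbol{p}(\widehat{\boldsymbol{\theta}}_{\phi_2}^A))$ and $D_{\phi_1}(\hat{\boldsymbol{p}},\boldsymbol{p}(\widehat{\boldsymbol{\theta}}_{\phi_2}^B))$ to second order around $\boldsymbol{p}(\boldsymbol{\theta}^B)$; by Theorem \ref{Th1} each of these, multiplied by $2N/\phi_1''(1)$, is asymptotically chi-square, and their difference telescopes so that the $\hat{\boldsymbol{p}}$ contributions that lie in the column space of $\boldsymbol{J}_B$ cancel, leaving a quadratic form supported on the orthogonal complement.

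The crucial step is then the rank-and-idempotency computation. Writing the limiting covariance in the variance-stabilized coordinates $\boldsymbol{p}(\boldsymbol{\theta}^B)^{-1/2}$, the relevant projection matrix is the difference $\boldsymbol{P}_A-\boldsymbol{P}_B$ of the orthogonal projectors onto the tangent spaces of the two models at $\boldsymbol{\theta}^B$. Because $\bm\Theta_{M_B}\subset\bm\Theta_{M_A}$, these tangent spaces are nested, so $\boldsymbol{P}_A-\boldsymbol{P}_B$ is itself an orthogonal projector whose rank equals $\dim(\bm\Theta_{M_A})-\dim(\bm\Theta_{M_B})=h_1-h_2$; multiplying it against the limiting covariance (which reduces to the identity on the appropriate subspace, by the information-matrix identity used in the proof of Theorem \ref{Th1}) yields an idempotent matrix of rank $h_1-h_2$. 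By the standard result that a quadratic form $\boldsymbol{W}^{\top}\boldsymbol{A}\boldsymbol{W}$ with $\boldsymbol{W}\sim N(\boldsymbol{0},\boldsymbol{\Sigma})$ is $\chi^2_r$ precisely when $\boldsymbol{A}\boldsymbol{\Sigma}$ is idempotent of rank $r$, both $S_{A-B}^{\phi_1,\phi_2}$ and $T_{A-B}^{\phi_1,\phi_2}$ converge to $\chi^2_{h_1-h_2}$.

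The main obstacle I anticipate is the bookkeeping around the nestedness: one must verify that the influence functions of $\widehat{\boldsymbol{\theta}}_{\phi_2}^A$ and $\widehat{\boldsymbol{\theta}}_{\phi_2}^B$ are compatible enough that the difference of the two quadratic forms collapses cleanly onto the projector difference $\boldsymbol{P}_A-\boldsymbol{P}_B$, rather than leaving a cross term. This requires the fact that the restricted estimator $\widehat{\boldsymbol{\theta}}_{\phi_2}^B$ is, asymptotically, the projection of the unrestricted estimator onto $\bm\Theta_{M_B}$ in the information metric; establishing that projection property—uniformly in the choice of $\phi_2$—is the delicate part, and it is precisely where the orthogonality $\boldsymbol{P}_B(\boldsymbol{P}_A-\boldsymbol{P}_B)=\boldsymbol{0}$ that guarantees idempotency comes from. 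Everything else is a routine second-order Taylor expansion together with the Slutsky and continuous-mapping arguments already invoked for Theorem \ref{Th1}.
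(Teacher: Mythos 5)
Your proposal follows essentially the same route as the paper's appendix proof: expand the two minimum $\phi_2$-divergence estimators via Theorem 1 of Felipe et al.\ (2014), pass to the model probability vectors by the delta method, and reduce the statistics to a quadratic form whose kernel is the difference $\boldsymbol{R}_{L}-\boldsymbol{R}_{M}$ of orthogonal projectors onto the nested tangent spaces, which is symmetric, idempotent, and of trace $h_{1}-h_{2}$ precisely because the columns of $\boldsymbol{M}$ are a subset of those of $\boldsymbol{L}$. If anything, your sketch is slightly more complete than the paper's, which writes out the argument only for $T_{A-B}^{\phi_{1},\phi_{2}}$ and leaves the $S_{A-B}^{\phi_{1},\phi_{2}}$ case to a citation of Pardo (2006, Theorem 7.1); your telescoping of the two second-order expansions of $D_{\phi_{1}}\left(\hat{\boldsymbol{p}},\boldsymbol{p}(\widehat{\boldsymbol{\theta}}_{\phi_{2}}^{A})\right)$ and $D_{\phi_{1}}\left(\hat{\boldsymbol{p}},\boldsymbol{p}(\widehat{\boldsymbol{\theta}}_{\phi_{2}}^{B})\right)$ supplies exactly that omitted step.
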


\begin{proof}
See Appendix.
\end{proof}

\begin{example}
(Continuation of Example 1) We shall consider the sequence of LCM
\begin{equation*}
\bm \Theta _{M_{1}}\supset \bm \Theta _{M_{2}}\supset \bm \Theta _{M_{3}}\supset \bm \Theta
_{M_{4}},
\end{equation*}%
In a similar way as in
the previous section we consider $\hat{\boldsymbol{\theta }}_{2/3}$ in order
to estimate the parameters of the different models. For testing, we consider the
family of phi-divergences test statistics  $S_{A-B}^{\phi _{1,}\phi _{2}}$ and $T_{A-B}^{\phi_1, \phi_2}$
given in (\ref{4.5}) and (\ref{4.6}), being $\phi _{2}(x)=\phi _{a}(x),$ with $a=2/3,$ and $%
\phi _{a}(x)$ defined in (\ref{eq9}). For $\phi _{1}(x)$ we shall take $\phi
_{a}(x)$ with $a=-1,-1/2,0,2/3,3/2,2,5/2$ and $3$. In Table 3 we
present the results obtained.

\begin{table}[h]
\begin{center}
\begin{tabular}{|c|ccc|c|ccc|}
\hline a/Model & $M_1-M_2$ & $M_2-M_3$ & $M_3-M_4$ & & $M_1-M_2$ & $M_2-M_3$ & $M_3-M_4$ \\
\hline -1      & 3.761     & 4.610     & 31.465 & & 3.431 & 4.613 & 31.005 \\
      -1/2     & 3.757     & 4.593     & 30.977 & & 3.417 & 4.604 & 30.845\\
       0       & 3.755     & 4.584     & 30.769 & & 3.403 & 4.595 & 30.722\\
       2/3     & 3.754     & 4.578     & 30.626 & & 3.386 & 4.585 & 30.616\\
       1       & 3.754     & 4.580     & 30.659 & & 3.378 & 4.580 & 30.587\\
       3/2     & 3.756     & 4.586     & 30.820 & & 3.366 & 4.574 & 30.574\\
       2       & 3.759     & 4.599     & 30.991 & & 3.355 & 4.570 & 30.597\\
       5/2     & 3.763     & 4.617     & 31.347 & & 3.344 & 4.566 & 30.655\\
       3       & 3.769     & 4.641     & 31.765 & & 3.334 & 4.563 & 30.749\\
\hline $\chi^2_{i;0.05}$ & 5.99   & 3.84      & 3.84 & & 5.99   & 3.84      & 3.84\\
\hline
\end{tabular}
\end{center}
\caption{Results for Example 8 for statistics $S$ (left) and $T$ (right).}
\end{table}

As a conclusion, we can adopt LCM $M_2$ as the best model in all cases. As before, the values obtained are very similar, due to the asymptotical results.
\end{example}

\begin{remark}
Using the ideas given in Remark 3 we can consider the
following two families of $(h,\phi )$-divergence test statistics:%
\[
S_{A-B}^{\phi _{1,}\phi _{2},h}=\frac{2N}{\phi _{1}^{\prime \prime }\left(
1\right) h^{\prime }(0)}\left( h\left( D_{\phi _{1}}\left( \hat{\boldsymbol{p%
}},\boldsymbol{p}(\widehat{\boldsymbol{\theta }}_{\phi _{2}}^{A})\right)
\right) -h\left( D_{\phi _{1}}\left( \hat{\boldsymbol{p}},\boldsymbol{p}(%
\widehat{\boldsymbol{\theta }}_{\phi _{2}}^{B})\right) \right) \right) ,
\]%
and
\[
T_{A-B}^{\phi _{1},\phi _{2},h}=\frac{2N}{\phi _{1}^{\prime \prime
}(1)h^{\prime }(0)}h\left( D_{\phi }\left( \boldsymbol{p}(\widehat{%
\boldsymbol{\theta }}_{\phi _{2}}^{A}),\boldsymbol{p}(\widehat{\boldsymbol{%
\theta }}_{\phi _{2}}^{B})\right) \right) .
\]%
It is easy to establish that again
\[
S_{A-B}^{\phi _{1,}\phi _{2},h}\overset{\mathcal{L}}{\underset{%
N\longrightarrow \infty }{\mathcal{\rightarrow }}}\chi _{h_{1}-h_{2}}^{2},
\]
and
\[
T_{A-B}^{\phi _{1},\phi _{2},h}\overset{\mathcal{L}}{\underset{%
N\longrightarrow \infty }{\mathcal{\rightarrow }}}\chi _{h_{1}-h_{2}}^{2}.
\]
\end{remark}

\section{Simulation}

Sections 2 and 3 present theoretical results for testing hypothesis in latent models with binary data. These results give the asymptotic distribution theory for the phi-divergence test statistics given in \eqref{2.3}, \eqref{4.5} and \eqref{4.6} under the null hypothesis. In this section we present a simulation study to analyze the behavior of this statistics in small samples. We shall analyze the test statistics given in \eqref{2.3}.


In Felipe et al. (2014), it was established that the best way to estimate the unknown parameters from the point of view of the efficiency as well as the robustness was the minimum power divergence obtained for $a=2/3$, as this estimator balances infinitesimal robustness and asymptotic efficiency.

%

Therefore, in our simulation study we shall consider this estimator. We compare the different test statistics of the family $T_{\Phi_a}^{\Phi_{2/3}}$ defined in \eqref{pag7}. The theoretical LCM with binary data that we shall consider in our simulation study is given by a theoretical model with 5 dichotomous questions and 10 latent classes; next, 7 parameters $\lambda_j$ and 6 parameters $\eta_k$ are considered; the corresponding matrices of the model are

{\small
$$ {\bf Q}_1= \left( \begin{array}{ccccc}
                         1 & 0 & 0 & 0 & 0 \\
                         0 & 0 & 0 & 0 & 0 \\
                         0 & 0 & 0 & 0 & 0 \\
                         0 & 0 & 0 & 0 & 1 \\
                         0 & 0 & 0 & 1 & 0 \\
                         0 & 0 & 1 & 0 & 0 \\
                         0 & 1 & 0 & 0 & 0 \\
                         1 & 0 & 0 & 0 & 0 \\
                         0 & 0 & 0 & 1 & 0 \\
                         0 & 0 & 0 & 0 & 1 \\
\end{array}\right) ,
{\bf Q}_2 =  \left( \begin{array}{ccccc}
                         0 & 1 & 0 & 0 & 0 \\
                         1 & 0 & 0 & 0 & 0 \\
                         0 & 0 & 0 & 0 & 0 \\
                         0 & 0 & 0 & 0 & 0 \\
                         0 & 0 & 0 & 0 & 1 \\
                         0 & 0 & 0 & 1 & 0 \\
                         0 & 0 & 1 & 0 & 0 \\
                         0 & 0 & 0 & 0 & 1 \\
                         1 & 0 & 0 & 0 & 0 \\
                         0 & 0 & 1 & 0 & 0 \\
\end{array}\right) , {\bf Q}_3= \left( \begin{array}{ccccc}
                         0 & 0 & 1 & 0 & 0 \\
                         0 & 1 & 0 & 0 & 0 \\
                         1 & 0 & 0 & 0 & 0 \\
                         0 & 0 & 0 & 0 & 0 \\
                         0 & 0 & 0 & 0 & 0 \\
                         0 & 0 & 0 & 0 & 1 \\
                         0 & 0 & 0 & 1 & 0 \\
                         0 & 1 & 0 & 0 & 0 \\
                         0 & 0 & 0 & 0 & 1 \\
                         1 & 0 & 0 & 0 & 0 \\
\end{array}\right) , {\bf Q}_4=\left( \begin{array}{ccccc}
                         0 & 0 & 0 & 1 & 0 \\
                         0 & 0 & 1 & 0 & 0 \\
                         0 & 1 & 0 & 0 & 0 \\
                         1 & 0 & 0 & 0 & 0 \\
                         0 & 0 & 0 & 0 & 0 \\
                         0 & 0 & 0 & 0 & 0 \\
                         0 & 0 & 0 & 0 & 1 \\
                         0 & 0 & 0 & 0 & 0 \\
                         0 & 1 & 0 & 0 & 0 \\
                         0 & 0 & 0 & 0 & 0 \\
\end{array}\right) $$ }

$$ {\bf Q}_5=\left( \begin{array}{ccccc}
                         0 & 0 & 0 & 0 & 1 \\
                         0 & 0 & 0 & 1 & 0 \\
                         0 & 0 & 1 & 0 & 0 \\
                         0 & 1 & 0 & 0 & 0 \\
                         1 & 0 & 0 & 0 & 0 \\
                         0 & 0 & 0 & 0 & 0 \\
                         0 & 0 & 0 & 0 & 0 \\
                         0 & 0 & 1 & 0 & 0 \\
                         0 & 0 & 0 & 0 & 0 \\
                         0 & 0 & 0 & 1 & 0 \\
\end{array}\right) ,\, \, {\bf Q}_6= \left( \begin{array}{ccccc}
                         0 & 0 & 0 & 0 & 0 \\
                         0 & 0 & 0 & 0 & 1 \\
                         0 & 0 & 0 & 1 & 0 \\
                         0 & 0 & 1 & 0 & 0 \\
                         0 & 1 & 0 & 0 & 0 \\
                         1 & 0 & 0 & 0 & 0 \\
                         0 & 0 & 0 & 0 & 0 \\
                         0 & 0 & 0 & 0 & 0 \\
                         0 & 0 & 1 & 0 & 0 \\
                         0 & 1 & 0 & 0 & 0 \\
\end{array}\right) ,\, \, {\bf Q}_7=\left( \begin{array}{ccccc}
                         0 & 0 & 0 & 0 & 0 \\
                         0 & 0 & 0 & 0 & 0 \\
                         0 & 0 & 0 & 0 & 1 \\
                         0 & 0 & 0 & 1 & 0 \\
                         0 & 0 & 1 & 0 & 0 \\
                         0 & 1 & 0 & 0 & 0 \\
                         1 & 0 & 0 & 0 & 0 \\
                         0 & 0 & 0 & 1 & 0 \\
                         0 & 0 & 0 & 0 & 0 \\
                         0 & 0 & 0 & 0 & 0 \\
\end{array}\right) .$$ Matrix ${\bf C}$ is the null matrix. Matrix ${\bf V}$ is given by $$ {\bf V}= \left( \begin{array}{cccccc}
                         1 & 0 & 0 & 0 & 0 & 1 \\
                         1 & 0 & 0 & 0 & 0 & 0 \\
                         0 & 1 & 0 & 0 & 0 & 1 \\
                         0 & 1 & 0 & 0 & 0 & 0 \\
                         0 & 0 & 1 & 0 & 0 & 1 \\
                         0 & 0 & 1 & 0 & 0 & 0 \\
                         0 & 0 & 0 & 1 & 0 & 1 \\
                         0 & 0 & 0 & 1 & 0 & 0 \\
                         0 & 0 & 0 & 0 & 1 & 1 \\
                         0 & 0 & 0 & 0 & 1 & 0 \\
\end{array}\right) ,$$ while ${\bf d}= {\mathbf 0}.$ The theoretical values for vector $\bm \lambda $ and $\bm \eta $ are

$$ \bm \lambda_0 =(\lambda_1^0, ..., \lambda_7^0)=(-3, -2, -1, 0, 1, 2, 3),\, \, \bm \eta_0 =(\eta_1^0, ..., \eta_6^0)=(0.5, 1, 1.5, 2, 2.5, 3).$$

We shall also consider different values of $a$; more concretely, we consider $a=-0.5, 0, 2/3, 1$.

For each value of $a$ we consider $R=10 000$ simulations and we reproduce the study for different sample sizes: 200, 300, 400, 500 and 1000. We must not forget that for $a=0$ and $a=1$ we have the likelihood ratio test and the chi-square ratio test statistics, respectively, but the unknown parameters are estimated using the minimum power divergence estimator with $a=2/3$ instead of the maximum likelihood estimator.

We consider as nominal size $\alpha =0.05$ and compute the {\it simulated exact  size}

$$ \hat{\alpha }_n^a:= {\sharp T_{\phi_a}^{\phi_{2/3}} > \chi^2_{g.l.;0.05} \over R}.$$

As explained in Dale (1986), we only consider the test statistics whose simulated exact size $ \hat{\alpha }_n^a$ satisfies

\begin{equation}
| logit(1-\hat{\alpha }_n^a)- logit (1-\alpha )|\leq 0.35
\end{equation}
where $logit (p)=log({p\over 1-p}).$ As a consequence, we only take under consideration the test statistics such that

\begin{equation}\label{eq23}
\hat{\alpha }_n^a \in (0.0357, 0.0695).
\end{equation}

At the same time we obtain the {\it simulated exact power} for different alternative hypothesis. More concretely, we shall consider a model with a new parameter $\lambda_8$ whose corresponding matrix ${\bf Q}_8$ is given by

$$ {\bf Q}_8=\left( \begin{array}{ccccc}
                         1 & 1 & 1 & 1 & 1  \\
                         1 & 1 & 1 & 1 & 1  \\
                         1 & 1 & 1 & 1 & 1  \\
                         1 & 1 & 1 & 1 & 1  \\
                         1 & 1 & 1 & 1 & 1  \\
                         0 & 0 & 0 & 0 & 0  \\
                         0 & 0 & 0 & 0 & 0  \\
                         0 & 0 & 0 & 0 & 0  \\
                         0 & 0 & 0 & 0 & 0  \\
                         0 & 0 & 0 & 0 & 0  \\
\end{array}\right) ,$$
and where this new parameter takes different values, namely -3, -2, -1.5, -1, -0.8, 0, 0.7, 0.9, 1, 1.3, 1.5, 2. Each of these values is related to an alternative hypothesis, except when considering value 0, that corresponds to the null hypothesis.

Simulating observations from each alternative hypothesis we get the simulated exact power for such alternatives

$$ \hat{\beta }^a:= {\sharp T_{\phi_a}^{\phi_{2/3}} > \chi^2_{g.l.;0.05} \over R}.$$ In Table 4 we present the simulated exact size as well as the simulated exact power for different values of $a$.

\begin{table}[h]
{\small
\begin{center}
\begin{tabular}{c|c|cccccccccccc}
$N$ & $a$ & -3 & -2	& -1.5 & -1 & -0.8 & 0 & 0.7 & 0.9 & 1 & 1.3 & 1.5 & 2 \\
\hline 200 & -.5 &	0.7764&	0.5242&	0.4945&	0.4786&	0.4510&	0.4041&	0.5449&	0.6352&	 0.6805&	0.8226&	0.8937&	0.9876 \\
& 0&	0.4833&	0.1662&	0.1595&	0.1505&	0.1455&	0.1095&	0.2305&	0.3232&	0.3835&	 0.6087&	0.7498&	0.9591\\
&2/3&	0.3181&	0.0617&	0.0562&	0.0472&	0.0460&	0.0354&	0.1047&	0.1769&	0.2377&	 0.4668&	0.6335&	0.9281\\
&1	&0.2915&	0.0497&	0.0426&	0.0338&	0.0351&	0.0269&	0.0885&	0.1523&	0.2124&	 0.4363&	0.6072&	0.9190\\
\hline
300 & -.5&	0.8009&	0.4473&	0.4090&	0.3724&	0.3371&	0.2680&	0.4566&	0.5838&	0.6604&	 0.8518&	0.9388&	0.9987\\
&0&	0.6202&	0.1984&	0.1789&	0.1646&	0.1410&	0.0946&	0.2519&	0.3949&	0.4891&	0.7569&	 0.8919&	0.9958\\
&2/3&	0.5094&	0.1041&	0.0876&	0.0772&	0.0627&	0.0400&	0.1599&	0.2911&	0.3818&	 0.6867&	0.8534&	0.9938\\
&1&	0.4887&	0.0870&	0.0714&	0.0621&	0.0515&	0.0331&	0.1417&	0.2687&	0.3568&	0.6714&	 0.8438&	0.9937\\
\hline 400 & -.5&	0.8311&	0.4086&	0.3823&	0.3195&	0.2828&	0.1933&	0.4200&	0.5925&	 0.6811&	0.9107&	0.9723&	0.9999\\
&0&	0.7337&	0.2245&	0.2028&	0.1712&	0.1493&	0.0819&	0.2871&	0.4761&	0.5878&	0.8739&	 0.9603&	0.9997\\
&2/3&	0.6670&	0.1429&	0.1222&	0.1006&	0.0863&	0.0429&	0.2124&	0.4027&	0.5214&	 0.8444&	0.9501&	0.9996\\
&1&	0.6531&	0.1274&	0.1099&	0.0870&	0.0733&	0.0360&	0.1995&	0.3855&	0.5052&	0.8368&	 0.9487&	0.9996\\
\hline 500 & -.5&	0.8793&	0.4129&	0.3626&	0.3065&	0.2578&	0.1448&	0.4212&	0.6309&	 0.7361&	0.9461&	0.9893&	1.0000\\
&0&	0.8207&	0.2625&	0.2293&	0.2015&	0.1626&	0.0743&	0.3357&	0.5598&	0.6781&	0.9320&	 0.9864&	1.0000\\
&2/3&	0.7821&	0.1909&	0.1600&	0.1344&	0.1075&	0.0448&	0.2755&	0.5064&	0.6388&	 0.9204&	0.9836&	1.0000\\
&1&	0.7748&	0.1769&	0.1458&	0.1196&	0.0943&	0.0397&	0.2623&	0.4962&	0.6289&	0.9177&	 0.9833&	1.0000\\
\hline 1000 & -.5&	0.9924&	0.4910&	0.5226&	0.4639&	0.3192&	0.0817&	0.6081&	0.8825&	 0.9537&	0.9999&	1.0000&	1.0000\\
&0&	0.9916&	0.3216&	0.3498&	0.3368&	0.2781&	0.0654&	0.5854&	0.8750&	0.9506&	0.9999&	 1.0000&	1.0000\\
&2/3&	0.9910&	0.2356&	0.2633&	0.2528&	0.2399&	0.0510&	0.5681&	0.8723&	0.9487&	 0.9998&	1.0000&	1.0000\\
&1&	0.9913&	0.2171&	0.2431&	0.2303&	0.2285&	0.0479&	0.5650&	0.8717&	0.9487&	0.9999&	 1.0000&	1.0000
\end{tabular}
\caption{Exact level and power for different values of $N$ and $a$.}
\end{center}
}
\end{table}

We also present the pictures for each sample size of the different alternative hypothesis for the test statistic $\lambda =-1/2, 0, 2/3, 1$ in Figures 1 to 5.

\begin{figure}[htb]
\begin{center}
$ \epsfxsize =15cm \epsfysize =9cm \epsfbox{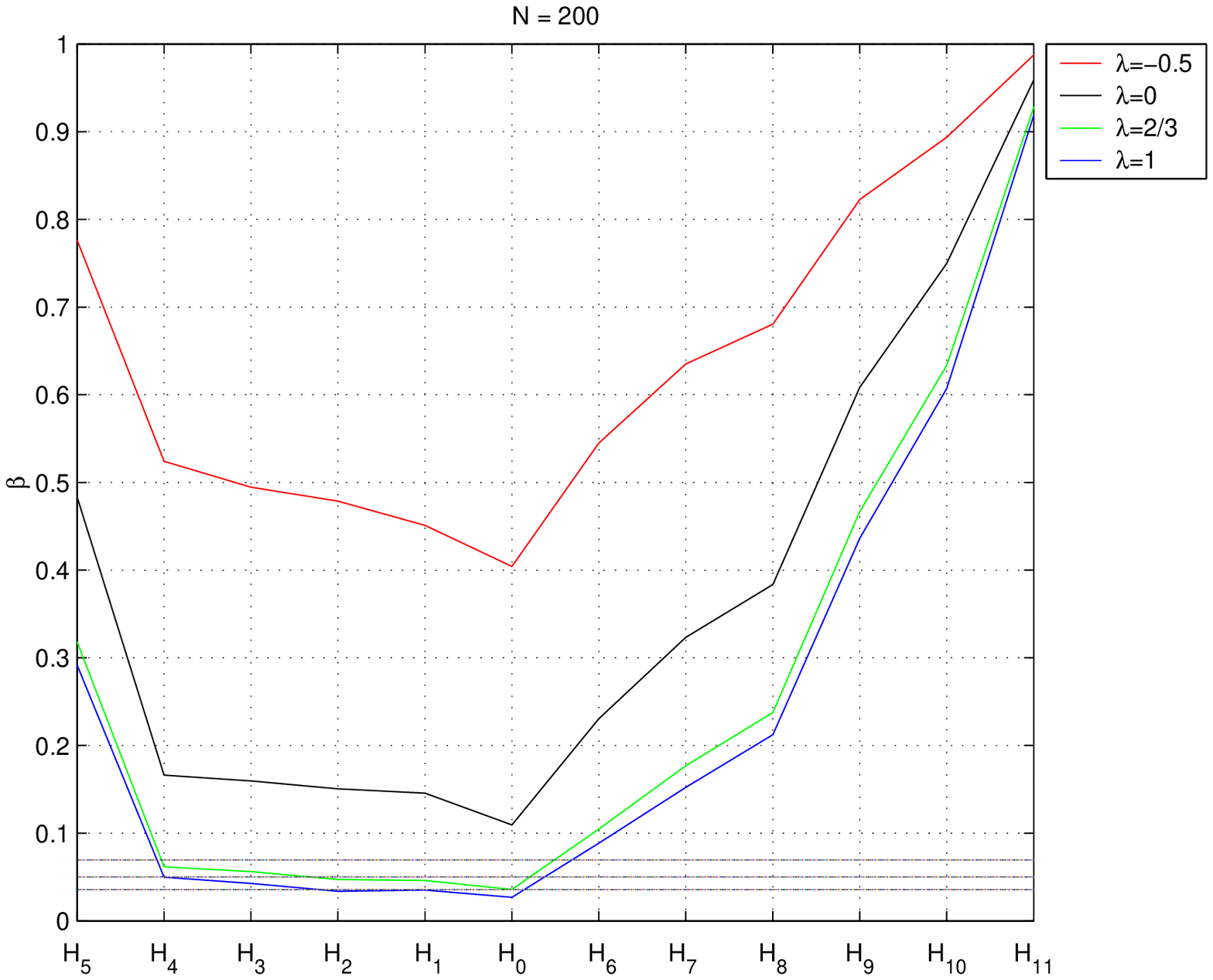} $
\end{center}
\caption{Simulated exact level and power for $N=200$.}
\end{figure}

\begin{figure}[htb]
\begin{center}
$ \epsfxsize =15cm \epsfysize =9cm \epsfbox{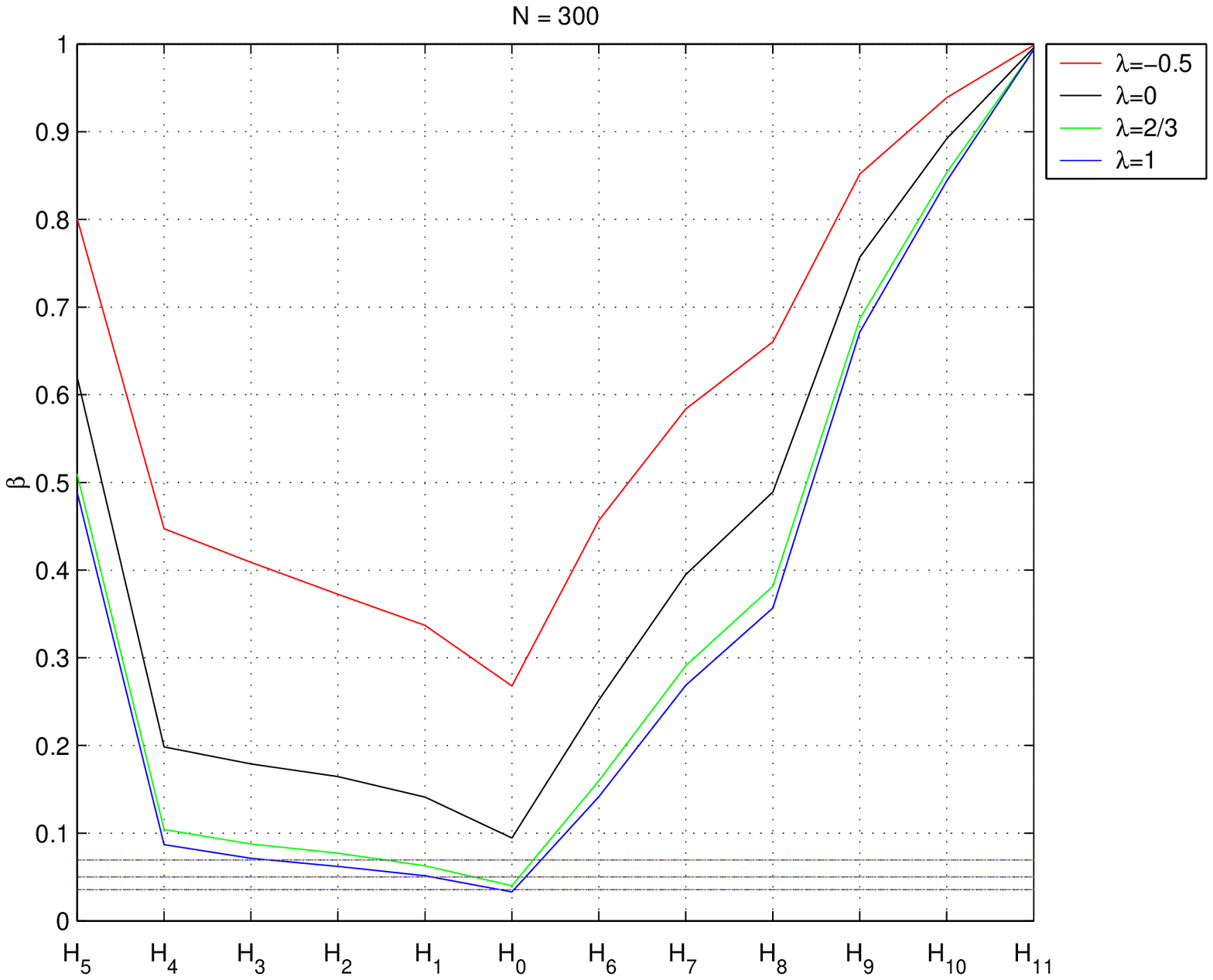} $
\end{center}
\caption{Simulated exact level and power for $N=300$.}
\end{figure}

\begin{figure}[htb]
\begin{center}
$ \epsfxsize =15cm \epsfysize =9cm \epsfbox{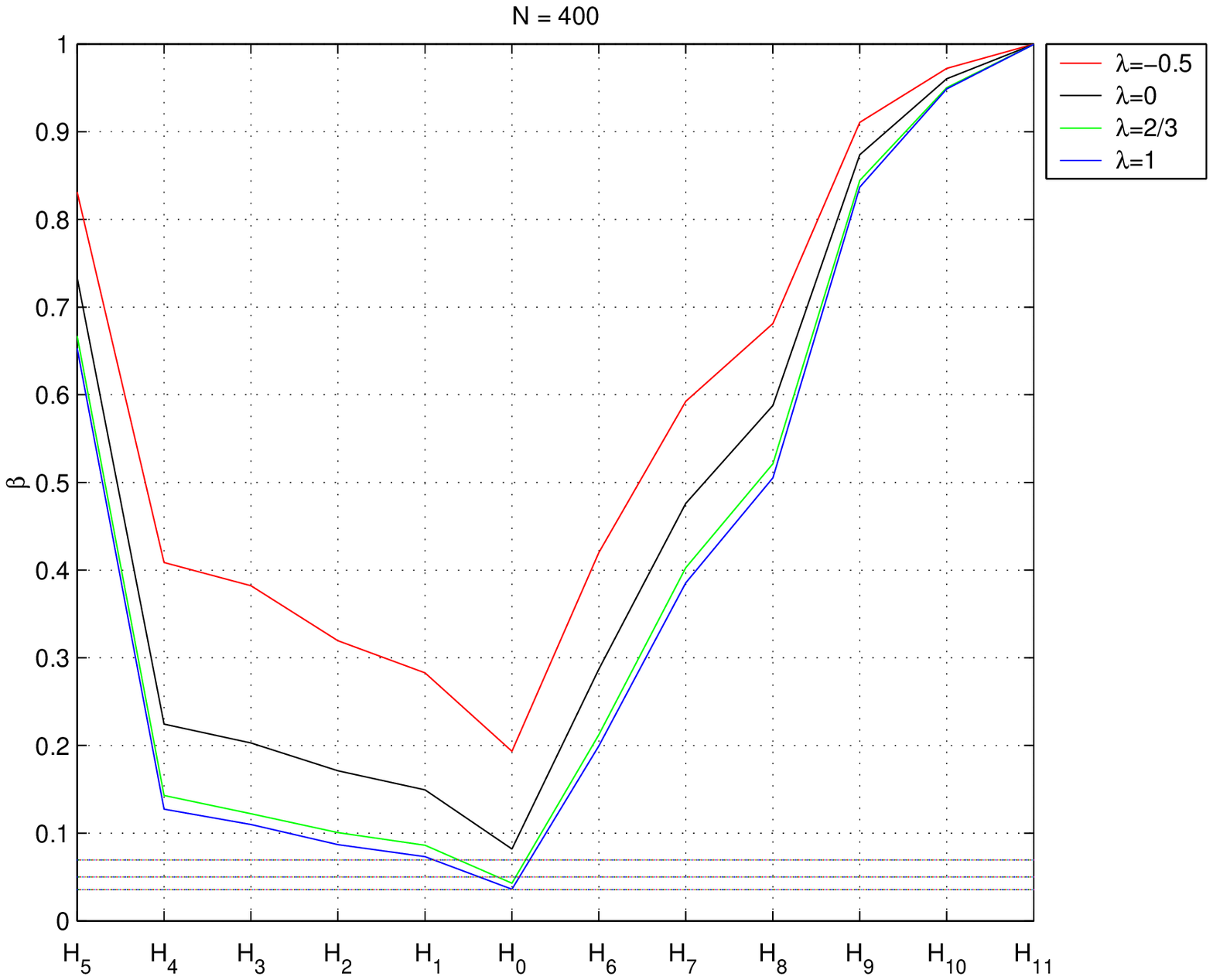} $
\end{center}
\caption{Simulated exact level and power for $N=400$.}
\end{figure}

\begin{figure}[htb]
\begin{center}
$ \epsfxsize =15cm \epsfysize =9cm \epsfbox{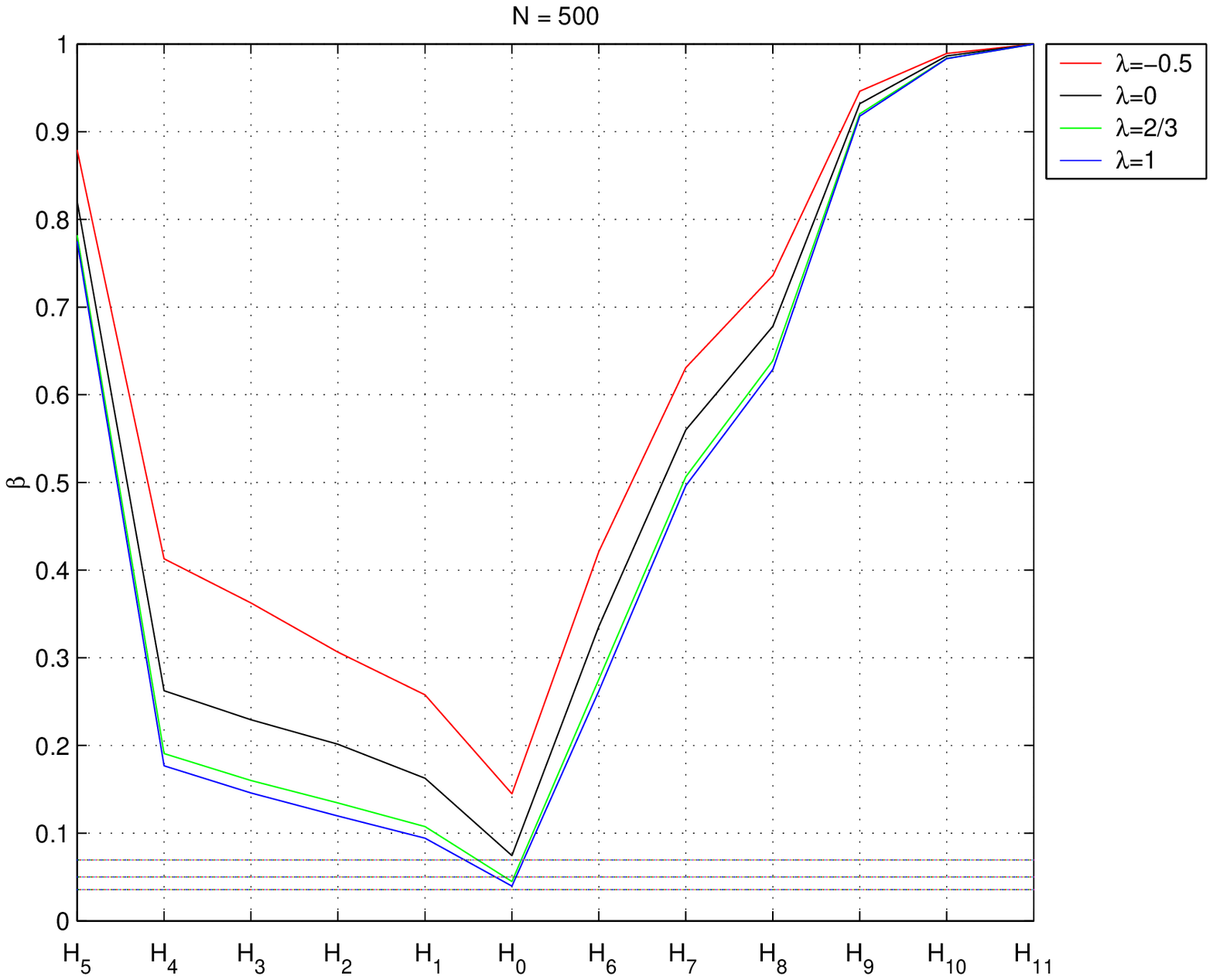} $
\end{center}
\caption{Simulated exact level and power for $N=500$.}
\end{figure}

\begin{figure}[htb]
\begin{center}
$ \epsfxsize =15cm \epsfysize =9cm \epsfbox{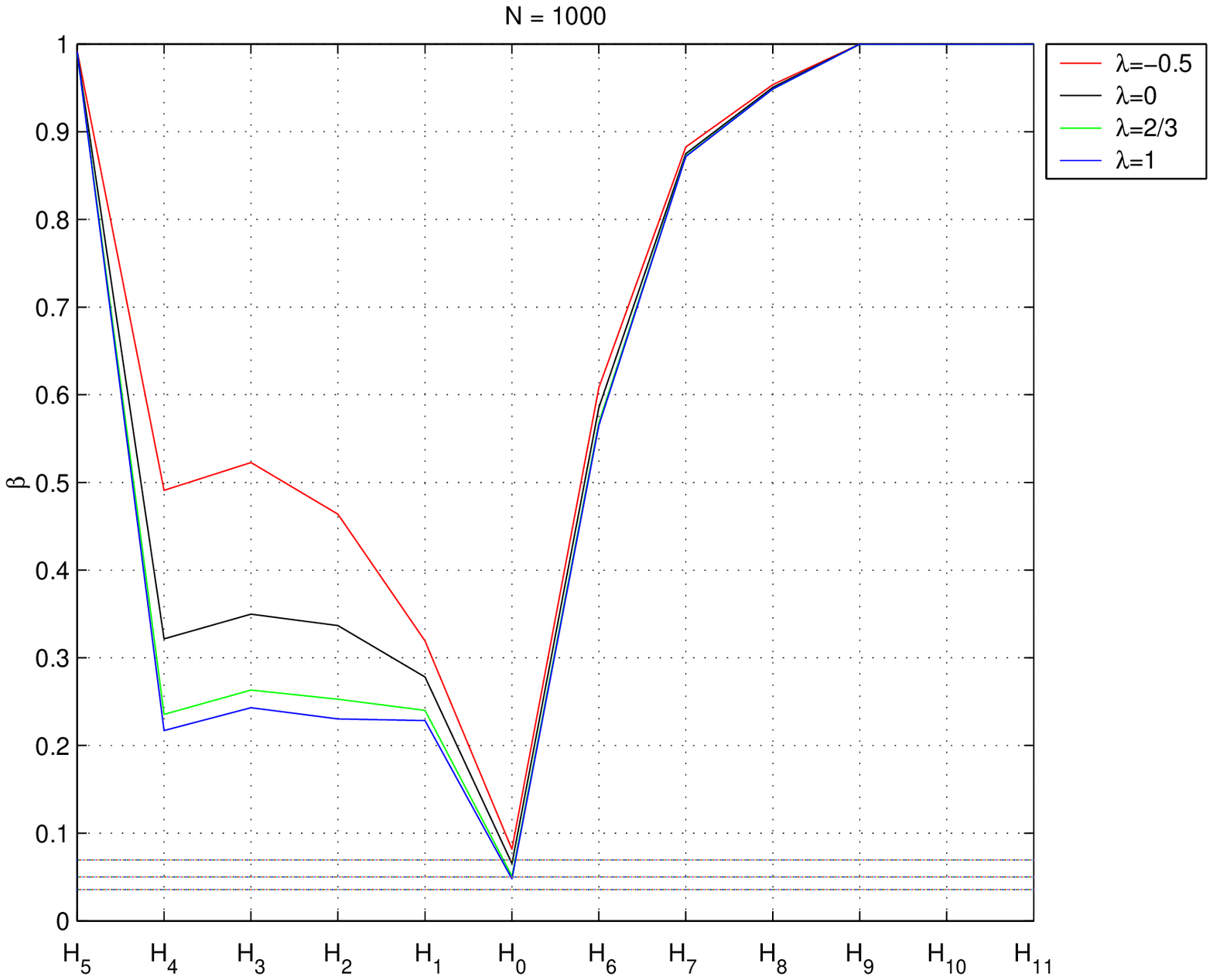} $
\end{center}
\caption{Simulated exact level and power for $N=1000$.}
\end{figure}

As it can be observed in Table 4 (see the column corresponding to $0$) and Figures 1 to 5, the simulated level is outside the interval given in \eqref{eq23} for $a=0, -0.5$ for all sample sizes under consideration; besides, for sample sizes $N=400, 500, 1000,$ the test statistic corresponding to $a=1$ lays inside this interval. Notice that the test statistic for $a=2/3$ is the only one laying in this interval for any sample size. As a straightforward conclusion, the test statistic for $a=2/3$ seems to be the best one for sample sizes $N=200, 300,$ and we just need to choose between $a=1$ and $a=2/3$ for $N=400, 500, 1000.$ For making this decision, we focus on the simulated power values, noting that they are higher for $a=2/3$ than for $a=1;$ we then conclude that $a=2/3$ seems to show a better behavior that the likelihood ratio test statistic and Pearson test statistic (with estimations obtained through $a=2/3$ instead of maximum likelihood) when dealing with LCM for binary data.

\section{Conclusions}

In this paper we have introduced phi-divergence test statistics in the context of LCM for binary data. In a previous paper, we have already shown that phi-divergence estimators can be a useful tool in this framework; now, we have treated two new problems: the problem of goodness-of-fit and the problem of selecting the best model throughout a nested sequence of models. Classically, as it can be seen for instance in Formann (1985), these problems have been considered on the basis of the likelihood-ratio-test and the chi-square test statistic. In both of them, we have derived two families of test statistics generalizing the classical ones; besides, we have obtained their asymptotical distribution under the null hypothesis of that LCM fits the data, showing that it coincides with the one of the classical test statistics; thus, they show the same behavior as the classical statistics for big sample sizes.

At this point, an interesting problem arises: are there differences for small or moderate sample sizes? To deal with this problem, we have carried out a simulation study; from this study, it seems that the phi-divergence test statistic for $a=2/3$ shows a better behavior than the classical test statistics.

%
%

\section{{\bf Acknowledgements}}

This work was partially supported by Grant MTM2012-33740.

\section{Appendix}

\textbf{Proof of Theorem 1}

A second-order Taylor expansion of $D_{\phi _{1}}\left( \boldsymbol{p},
\boldsymbol{q}\right) $ around $\left( \boldsymbol{p}\left( \boldsymbol{
\theta }_{0}\right) ,\boldsymbol{p}\left( \boldsymbol{\theta }_{0}\right)
\right) $ at $\left( \widehat{\boldsymbol{p}},\boldsymbol{p}\left( \widehat{
\boldsymbol{\theta }}_{\phi _{2}}\right) \right) $ is given by

\begin{equation*}
D_{\phi _{1}}\left( \widehat{\boldsymbol{p}},\boldsymbol{p}\left( \widehat{
\boldsymbol{\theta }}_{\phi _{2}}\right) \right) ={\phi _{1}^{\prime \prime
}(1)\over 2}\text{ }\left( \widehat{\boldsymbol{p}}-\boldsymbol{p}\left( \widehat{
\boldsymbol{\theta }}_{\phi _{2}}\right) \right) ^{T}\boldsymbol{D}_{
\boldsymbol{p}(\boldsymbol{\theta }_{0})}^{-1}\left( \widehat{\boldsymbol{p}}
-\boldsymbol{p}\left( \widehat{\boldsymbol{\theta }}_{\phi _{2}}\right)
\right) +o_{p}(N^{-1}),
\end{equation*}
being $\boldsymbol{\theta }_{0}=\left( \lambda _{1}^{0},...,\lambda
_{u}^{0},\eta _{1}^{0},...,\eta _{s}^{0}\right) .$ By $\boldsymbol{D}_{
\boldsymbol{p}(\boldsymbol{\theta }_{0})}$ we are denoting the diagonal
matrix with $\boldsymbol{p}\left( \boldsymbol{\theta }_{0}\right) $ in the
main diagonal. By Theorem 1 in Felipe et al (2014) we have
\begin{equation*}
\widehat{\boldsymbol{\theta }}_{\phi _{2}}-\boldsymbol{\theta }_{0}=\left(
\boldsymbol{L}\left( \boldsymbol{\theta }_{0}\right) ^{T}\boldsymbol{L}
\left( \boldsymbol{\theta }_{0}\right) \right) ^{-1}\boldsymbol{L}\left(
\boldsymbol{\theta }_{0}\right) ^{T}\boldsymbol{D}_{\boldsymbol{p}(
\boldsymbol{\theta }_{0})}^{-1/2}\left( \hat{\boldsymbol{p}}-\boldsymbol{p}
\left( \boldsymbol{\theta }_{0}\right) \right) +o_{p}(N^{-1/2}),
\end{equation*}
with
\begin{equation*}
\boldsymbol{L}\left( \boldsymbol{\theta }_{0}\right) =\boldsymbol{D}_{
\boldsymbol{p}(\boldsymbol{\theta }_{0})}^{-1/2}\left( \frac{\partial
\boldsymbol{p}\left( \boldsymbol{\theta }\right) }{\partial \boldsymbol{
\theta }}\right) _{\boldsymbol{\theta =\theta }_{0}}.
\end{equation*}
Therefore,
\begin{eqnarray*}
\boldsymbol{p}\left( \widehat{\boldsymbol{\theta }}_{\phi _{2}}\right) -
\boldsymbol{p}\left( \boldsymbol{\theta }_{0}\right) &=&\left( \frac{
\partial \boldsymbol{p}\left( \boldsymbol{\theta }\right) }{\partial
\boldsymbol{\theta }}\right) _{\boldsymbol{\theta =\theta }_{0}}\left( \widehat{
\boldsymbol{\theta }}_{\phi _{2}}-\boldsymbol{\theta }_{0}\right) +o_{p}(N^{-1/2}) \\
&=&\boldsymbol{D}_{\boldsymbol{p}(\boldsymbol{\theta }_{0})}^{1/2}\boldsymbol{L
}\left( \boldsymbol{\theta }_{0}\right) \left( \boldsymbol{L}\left(
\boldsymbol{\theta }_{0}\right) ^{T}\boldsymbol{L}\left( \boldsymbol{\theta }
_{0}\right) \right) ^{-1}\boldsymbol{L}\left( \boldsymbol{\theta }
_{0}\right) ^{T}\boldsymbol{D}_{\boldsymbol{p}(\boldsymbol{\theta }
_{0})}^{-1/2}\left( \hat{\boldsymbol{p}}-\boldsymbol{p}\left( \boldsymbol{
\theta }_{0}\right) \right) +o_{p}(N^{-1/2}) \\
&=&\boldsymbol{V}\left( \boldsymbol{\theta }_{0}\right) \left( \hat{
\boldsymbol{p}}-\boldsymbol{p}\left( \boldsymbol{\theta }_{0}\right) \right)
+o_{p}(N^{-1/2})
\end{eqnarray*}
with $\boldsymbol{V}\left( \boldsymbol{\theta }_{0}\right) :=\boldsymbol{D}_{
\boldsymbol{p}(\boldsymbol{\theta }_{0})}^{1/2}\boldsymbol{L}\left( \boldsymbol{
\theta }_{0}\right) \left( \boldsymbol{L}\left( \boldsymbol{\theta }
_{0}\right) ^{T}\boldsymbol{L}\left( \boldsymbol{\theta }_{0}\right) \right)
^{-1}\boldsymbol{L}\left( \boldsymbol{\theta }_{0}\right) ^{T}\boldsymbol{D}
_{\boldsymbol{p}(\boldsymbol{\theta }_{0})}^{-1/2}.$

On the other hand,

\begin{equation*}
\sqrt{N}\left( \hat{\boldsymbol{p}}-\boldsymbol{p}\left( \boldsymbol{\theta }_{0}\right) \right) \overset{\mathcal{L}}{\underset{N\longrightarrow \infty }
{\mathcal{\rightarrow }}}\mathcal{N}\left( \boldsymbol{0,\Sigma }_{
\boldsymbol{p}(\boldsymbol{\theta }_{0})}\right)
\end{equation*}
being

\begin{equation*}
\boldsymbol{\Sigma }_{\boldsymbol{p}(\boldsymbol{\theta }_{0})}=\boldsymbol{D
}_{\boldsymbol{p}(\boldsymbol{\theta }_{0})}^{{}}-\boldsymbol{p}(\boldsymbol{
\theta }_{0})\boldsymbol{p}(\boldsymbol{\theta }_{0})^{T}.
\end{equation*}

Then we have

\begin{equation*}
\widehat{\boldsymbol{p}}-\boldsymbol{p}\left( \widehat{{\bm \theta }}_{\phi _{2}}\right) =\left( {\bf I}- {\bf V}\left( {\bm \theta_{0}}\right) \right) \left( \hat{\boldsymbol{p}}-\boldsymbol{p} (\bm \theta_{0})\right) +o_{p}(N^{-1/2}) ,
\end{equation*}

and we conclude that
\begin{equation*}
\sqrt{N}\left( \widehat{\boldsymbol{p}}-\boldsymbol{p}\left( \widehat{
\boldsymbol{\theta }}_{\phi _{2}}\right) \right) \overset{\mathcal{L}}{
\underset{N\longrightarrow \infty }{\mathcal{\rightarrow }}}\mathcal{N}
\left( {\bf 0}, \left( \boldsymbol{I} - \boldsymbol{V}\left( \boldsymbol{\theta }_{0}\right) ^{T}\right) \boldsymbol{\Sigma }_{\boldsymbol{p}(\boldsymbol{
\theta }_{0})}\left( \boldsymbol{I}- \boldsymbol{V}\left( \boldsymbol{\theta }_{0}\right)
^{T}\right) \right) .
\end{equation*}

Notice that the asymptotic distribution of

\begin{equation*}
\frac{2N}{\phi _{1}^{\prime \prime }(1)}D_{\phi _{1}}\left( \widehat{
\boldsymbol{p}},\boldsymbol{p}\left( \widehat{\boldsymbol{\theta }}_{\phi
_{2}}\right) \right)
\end{equation*}
coincides with the asymptotic distribution of the quadratic form

\begin{equation*}
N \left( \widehat{\boldsymbol{p}}-\boldsymbol{p}\left( \widehat{\boldsymbol{
\theta }}_{\phi _{2}}\right) \right) ^{T}\boldsymbol{D}_{\boldsymbol{p}(
\boldsymbol{\theta }_{0})}^{-1}\left( \widehat{\boldsymbol{p}}-\boldsymbol{p}
\left( \widehat{\boldsymbol{\theta }}_{\phi _{2}}\right) \right) = \sqrt{N} \left( \widehat{\boldsymbol{p}}-\boldsymbol{p}\left( \widehat{\boldsymbol{
\theta }}_{\phi _{2}}\right) \right) ^{T}\boldsymbol{D}_{\boldsymbol{p}(
\boldsymbol{\theta }_{0})}^{-1/2}\boldsymbol{D}_{\boldsymbol{p}(
\boldsymbol{\theta }_{0})}^{-1/2}\left( \widehat{\boldsymbol{p}}-\boldsymbol{p}
\left( \widehat{\boldsymbol{\theta }}_{\phi _{2}}\right) \right) \sqrt{N}=X^TX,
\end{equation*}
with
$$X:=\sqrt{N}\boldsymbol{D}_{\boldsymbol{p}(
\boldsymbol{\theta }_{0})}^{-1/2}\left( \widehat{\boldsymbol{p}}-\boldsymbol{p}
\left( \widehat{\boldsymbol{\theta }}_{\phi _{2}}\right) \right).$$

Now, as

$$ X \overset{\mathcal{L}}{
\underset{N\longrightarrow \infty }{\mathcal{\rightarrow }}}\mathcal{N}
\left( {\bf 0}, \boldsymbol{D}_{\boldsymbol{p}(
\boldsymbol{\theta }_{0})}^{-1/2} \left( \boldsymbol{I} - \boldsymbol{V}\left( \boldsymbol{\theta }_{0}\right) ^{T}\right) \boldsymbol{\Sigma }_{\boldsymbol{p}(\boldsymbol{
\theta }_{0})}\left( \boldsymbol{I}- \boldsymbol{V}\left( \boldsymbol{\theta }_{0}\right)
^{T}\right) \boldsymbol{D}_{\boldsymbol{p}(
\boldsymbol{\theta }_{0})}^{-1/2} \right) , $$ we conclude that the asymptotic distribution of $X^TX$ will be a chi-square distribution if the matrix

$$ {\bf Q} \left( {\bm \theta }_{0}\right):=\boldsymbol{D}_{\boldsymbol{p}(
\boldsymbol{\theta }_{0})}^{-1/2} \left( \boldsymbol{I} - \boldsymbol{V}\left( \boldsymbol{\theta }_{0}\right) ^{T}\right) \boldsymbol{\Sigma }_{\boldsymbol{p}(\boldsymbol{
\theta }_{0})}\left( \boldsymbol{I}- \boldsymbol{V}\left( \boldsymbol{\theta }_{0}\right)
^{T}\right) \boldsymbol{D}_{\boldsymbol{p}(
\boldsymbol{\theta }_{0})}^{-1/2} $$
is idempotent and symmetric, and in this case de degrees of freedom will be
the trace of the matrix ${\bf Q} \left( {\bm \theta }_{0}\right) .$ Symmetry is evident. Establishing that the matrix ${\bf Q} \left( {\bm \theta }_{0}\right) $ is idempotent and that
its trace is $2^{k}-(u+t)-1$ is a simple but long and tedious exercise; a detailed proof of this fact can be found in Pardo (2006) (Theorem 6.1, pag. 259).

\textbf{Proof of Theorem 6}

Based on Theorem 1 in Felipe et al. (2014) we have
\begin{equation*}
\widehat{\boldsymbol{\theta }}_{\phi _{2}}^{A}-\boldsymbol{\theta }%
_{0}^{A}=\left( \boldsymbol{L}\left( \boldsymbol{\theta }_{0}^{A}\right) ^{T}%
\boldsymbol{L}\left( \boldsymbol{\theta }_{0}^{A}\right) \right) ^{-1}%
\boldsymbol{L}\left( \boldsymbol{\theta }_{0}^{A}\right) ^{T}\boldsymbol{D}_{%
\boldsymbol{p}(\boldsymbol{\theta }_{0}^{A})}^{-1/2}\left( \hat{\boldsymbol{p%
}}-\boldsymbol{p}\left( \boldsymbol{\theta }_{0}^{A}\right) \right)
+o_{p}(N^{-1/2}),
\end{equation*}%
being
\begin{equation*}
\boldsymbol{\theta }_{0}^{A}=\left( \boldsymbol{\theta }_{0}^{A,1},%
\boldsymbol{0},\boldsymbol{\theta }_{0}^{A,3},\boldsymbol{0}\right) ,\text{ }%
\boldsymbol{p}\left( \boldsymbol{\theta }_{0}^{A}\right) =\left( p\left(
\boldsymbol{y}_{1},\boldsymbol{\theta }_{0}^{A}\right) ,...,p\left(
\boldsymbol{y}_{2^{k}},\boldsymbol{\theta }_{0}^{A}\right) \right)
\end{equation*}%
and
\begin{equation*}
\boldsymbol{L}\left( \boldsymbol{\theta }_{0}^{A}\right) =\boldsymbol{D}_{%
\boldsymbol{p}(\boldsymbol{\theta }_{0}^{A})}^{-1/2}\left( \frac{\partial
\boldsymbol{p}\left( \boldsymbol{\theta }^{A}\right) }{\partial \boldsymbol{%
\theta }^{A,1}},\frac{\partial \boldsymbol{p}\left( \boldsymbol{\theta }%
^{A}\right) }{\partial \boldsymbol{\theta }^{A,2}},\frac{\partial
\boldsymbol{p}\left( \boldsymbol{\theta }^{A}\right) }{\partial \boldsymbol{%
\theta }^{A,3}},\frac{\partial \boldsymbol{p}\left( \boldsymbol{\theta }%
^{A}\right) }{\partial \boldsymbol{\theta }^{A,4}}\right) _{\boldsymbol{%
\theta }^{A}=\boldsymbol{\theta }_{0}^{A}}.
\end{equation*}%

Similarly,
\begin{equation*}
\widehat{\boldsymbol{\theta }}_{\phi _{2}}^{B}-\boldsymbol{\theta }%
_{0}^{B}=\left( \boldsymbol{M}\left( \boldsymbol{\theta }_{0}^{B}\right) ^{T}%
\boldsymbol{M}\left( \boldsymbol{\theta }_{0}^{B}\right) \right) ^{-1}%
\boldsymbol{M}\left( \boldsymbol{\theta }_{0}^{B}\right) ^{T}\boldsymbol{D}_{%
\boldsymbol{p}(\boldsymbol{\theta }_{0}^{B})}^{-1/2}\left( \hat{\boldsymbol{p%
}}-\boldsymbol{p}\left( \boldsymbol{\theta }_{0}^{B}\right) \right)
+o_{p}(N^{-1/2})
\end{equation*}%
being
\begin{equation*}
\boldsymbol{M}\left( \boldsymbol{\theta }_{0}^{B}\right) =\boldsymbol{D}_{%
\boldsymbol{p}(\boldsymbol{\theta }_{0}^{B})}^{-1/2}\left( \frac{\partial
\boldsymbol{p}\left( \boldsymbol{\theta }^{B}\right) }{\partial \boldsymbol{%
\theta }^{A,1}},\frac{\partial \boldsymbol{p}\left( \boldsymbol{\theta }%
^{B}\right) }{\partial \boldsymbol{\theta }^{A,3}}\right) _{\boldsymbol{%
\theta }^{B}=\boldsymbol{\theta }_{0}^{B}}.
\end{equation*}%

As
$$ \boldsymbol{\theta }_{0}^{B}= \left( \boldsymbol{\theta }_{0}^{A,1}, \boldsymbol{\theta }_{0}^{A,3}\right) ,$$
and by the hypothesis,

$$  \left( \boldsymbol{\theta }_{0}^{A,2}, \boldsymbol{\theta }_{0}^{A,4}\right) = \boldsymbol{0} ,$$
it follows that
\begin{equation*}
\widehat{\boldsymbol{\theta }}_{\phi _{2}}^{B}=\left( \widehat{\boldsymbol{%
\theta }}_{\phi _{2}}^{A,1},\widehat{\boldsymbol{\theta }}_{\phi
_{2}}^{A,3}\right) , \boldsymbol{\theta }%
_{0}^{B}=\left( \boldsymbol{%
\theta }_{0}^{A,1},\boldsymbol{\theta }_{0}^{A,3}\right)
\end{equation*}%
and
$$ \boldsymbol{p}(\boldsymbol{\theta }_{0}^{B})=\boldsymbol{p}(\boldsymbol{\theta }_{0}^{A}),$$ whence

$$ \boldsymbol{D}_{\boldsymbol{p}(\boldsymbol{\theta }_{0}^{B})} = \boldsymbol{D}_{\boldsymbol{p}(\boldsymbol{\theta }_{0}^{A})}, \, \, \boldsymbol{M}\left( \boldsymbol{\theta }_{0}^{B}\right) =\boldsymbol{M}\left( \boldsymbol{\theta }_{0}^{B}\right) =\boldsymbol{D}_{%
\boldsymbol{p}(\boldsymbol{\theta }_{0}^{A})}^{-1/2}\left( \frac{\partial
\boldsymbol{p}\left( \boldsymbol{\theta }^{A}\right) }{\partial \boldsymbol{%
\theta }^{A,1}},\frac{\partial \boldsymbol{p}\left( \boldsymbol{\theta }%
^{A}\right) }{\partial \boldsymbol{\theta }^{A,3}}\right) _{\boldsymbol{%
\theta }^{A}=\boldsymbol{\theta }_{0}^{A}}. $$

Therefore,
\begin{equation*}
\boldsymbol{p}\left( \widehat{\boldsymbol{\theta }}_{\phi _{2}}^{A}\right) -%
\boldsymbol{p}\left( \boldsymbol{\theta }_{0}^{A}\right) =\boldsymbol{D}_{%
\boldsymbol{p}(\boldsymbol{\theta }_{0}^{A})}^{1/2}\boldsymbol{L}\left(
\boldsymbol{\theta }_{0}^{A}\right) \left( \boldsymbol{L}\left( \boldsymbol{%
\theta }_{0}^{A}\right) ^{T}\boldsymbol{L}\left( \boldsymbol{\theta }%
_{0}^{A}\right) \right) ^{-1}\boldsymbol{L}\left( \boldsymbol{\theta }%
_{0}^{A}\right) ^{T}\boldsymbol{D}_{\boldsymbol{p}(\boldsymbol{\theta }%
_{0}^{A})}^{-1/2}\left( \hat{\boldsymbol{p}}-\boldsymbol{p}\left(
\boldsymbol{\theta }_{0}^{A}\right) \right) +o_{p}(N^{-1/2})
\end{equation*}%
and
{\small \begin{equation*}
\boldsymbol{p}\left( \widehat{\boldsymbol{\theta }}_{\phi _{2}}^{B}\right) -%
\boldsymbol{p}\left( \boldsymbol{\theta }_{0}^{B}\right) = \boldsymbol{p}\left( \widehat{\boldsymbol{\theta }}_{\phi _{2}}^{B}\right) -
\boldsymbol{p}\left( \boldsymbol{\theta }_{0}^{A}\right)=\boldsymbol{D}_{%
\boldsymbol{p}(\boldsymbol{\theta }_{0}^{A})}^{1/2}\boldsymbol{M}\left(
\boldsymbol{\theta }_{0}^{A}\right) \left( \boldsymbol{M}\left( \boldsymbol{%
\theta }_{0}^{A}\right) ^{T}\boldsymbol{M}\left( \boldsymbol{\theta }%
_{0}^{A}\right) \right) ^{-1}\boldsymbol{M}\left( \boldsymbol{\theta }%
_{0}^{A}\right) ^{T}\boldsymbol{D}_{\boldsymbol{p}(\boldsymbol{\theta }%
_{0}^{A})}^{-1/2}\left( \hat{\boldsymbol{p}}-\boldsymbol{p}\left(
\boldsymbol{\theta }_{0}^{A}\right) \right) +o_{p}(N^{-1/2}).
\end{equation*}}
Then,
\begin{equation*}
\boldsymbol{D}_{\boldsymbol{p}(\boldsymbol{\theta }_{0}^{A})}^{-1/2}\left(
\boldsymbol{p}\left( \widehat{\boldsymbol{\theta }}_{\phi _{2}}^{A}\right) - \boldsymbol{p}\left( \widehat{\boldsymbol{\theta }}_{\phi _{2}}^{B}\right)
\right) =\left( \boldsymbol{R}_{L}\left( \boldsymbol{\theta }_{0}^{A}\right)
-\boldsymbol{R}_{M}\left( \boldsymbol{\theta }_{0}^{A}\right) \right)
\boldsymbol{D}_{\boldsymbol{p}(\boldsymbol{\theta }_{0}^{A})}^{-1/2}\left(
\hat{\boldsymbol{p}}-\boldsymbol{p}\left( \boldsymbol{\theta }%
_{0}^{A}\right) \right) +o_{p}(N^{-1/2})
\end{equation*}%
with
\begin{equation*}
\boldsymbol{R}_{L}\left( \boldsymbol{\theta }_{0}^{A}\right) =\boldsymbol{L}%
\left( \boldsymbol{\theta }_{0}^{A}\right) \left( \boldsymbol{L}\left(
\boldsymbol{\theta }_{0}^{A}\right) ^{T}\boldsymbol{L}\left( \boldsymbol{%
\theta }_{0}^{A}\right) \right) ^{-1}\boldsymbol{L}\left( \boldsymbol{\theta
}_{0}^{A}\right) ^{T}, \text{ }\boldsymbol{R}_{M}\left( \boldsymbol{\theta }_{0}^{A}\right) =%
\boldsymbol{M}\left( \boldsymbol{\theta }_{0}^{A}\right) \left( \boldsymbol{M%
}\left( \boldsymbol{\theta }_{0}^{A}\right) ^{T}\boldsymbol{M}\left(
\boldsymbol{\theta }_{0}^{A}\right) \right) ^{-1}\boldsymbol{M}\left(
\boldsymbol{\theta }_{0}^{A}\right) ^{T}.
\end{equation*}

Therefore the asymptotic distribution of
\begin{equation*}
\sqrt{N}\boldsymbol{D}_{\boldsymbol{p}(\boldsymbol{\theta }%
_{0}^{A})}^{-1/2}\left( \boldsymbol{p}\left( \widehat{\boldsymbol{\theta }}%
_{\phi _{2}}^{A}\right) -\boldsymbol{p}\left( \widetilde{\boldsymbol{\theta }%
}_{\phi _{2}}\right) \right)
\end{equation*}%
is a normal distribution with vector mean zero and variance-covariance matrix%
\begin{equation*}
\boldsymbol{\Sigma }_{AB}=\left( \boldsymbol{R}_{L}\left( \boldsymbol{\theta
}_{0}^{A}\right) -\boldsymbol{R}_{M}\left( \boldsymbol{\theta }%
_{0}^{A}\right) \right) \boldsymbol{D}_{\boldsymbol{p}(\boldsymbol{\theta }%
_{0}^{A})}^{-1/2}\boldsymbol{\Sigma }_{\boldsymbol{p}\left( \boldsymbol{%
\theta }_{0}^{A}\right) }\boldsymbol{D}_{\boldsymbol{p}(\boldsymbol{\theta }%
_{0}^{A})}^{-1/2}\left( \boldsymbol{R}_{L}\left( \boldsymbol{\theta }%
_{0}^{A}\right) -\boldsymbol{R}_{M}\left( \boldsymbol{\theta }%
_{0}^{A}\right) \right) ,
\end{equation*}%
being
\begin{equation*}
\boldsymbol{\Sigma }_{\boldsymbol{p}\left( \boldsymbol{\theta }%
_{0}^{A}\right) }=diag\left( \boldsymbol{p}\left( \boldsymbol{\theta }%
_{0}^{A}\right) \right) -\boldsymbol{p}\left( \boldsymbol{\theta }%
_{0}^{A}\right) \boldsymbol{p}\left( \boldsymbol{\theta }_{0}^{A}\right) ^{T}
\end{equation*}%
It can be established that $\boldsymbol{\Sigma }_{AB}$ can be
written as $\boldsymbol{\Sigma }_{AB}=\boldsymbol{R}_{L}\left( \boldsymbol{%
\theta }_{0}^{A}\right) -\boldsymbol{R}_{M}\left( \boldsymbol{\theta }%
_{0}^{A}\right) $ because $\boldsymbol{R}_{L}\left( \boldsymbol{\theta }%
_{0}^{A}\right) $ and $\boldsymbol{R}_{M}\left( \boldsymbol{\theta }%
_{0}^{A}\right) $ are orthogonal projections operators and the columns of $%
\boldsymbol{M}\left( \boldsymbol{\theta }_{0}^{A}\right) $ are a subset of
the columns of $\boldsymbol{L}\left( \boldsymbol{\theta }_{0}^{A}\right) $ (see again Pardo (2006), Th. 7.1, pag. 311 for details). Then
\begin{equation*}
\boldsymbol{R}_{L}\left( \boldsymbol{\theta }_{0}^{A}\right) \boldsymbol{R}%
_{M}\left( \boldsymbol{\theta }_{0}^{A}\right) =\boldsymbol{R}_{M}\left(
\boldsymbol{\theta }_{0}^{A}\right) \boldsymbol{R}_{L}\left( \boldsymbol{%
\theta }_{0}^{A}\right) =\boldsymbol{R}_{M}\left( \boldsymbol{\theta }%
_{0}^{A}\right) .
\end{equation*}%
At the same time
\begin{equation*}
\boldsymbol{p}\left( \boldsymbol{\theta }_{0}^{A}\right) ^{1/2}\boldsymbol{R}%
_{L}\left( \boldsymbol{\theta }_{0}^{A}\right) =\boldsymbol{p}\left(
\boldsymbol{\theta }_{0}^{A}\right) ^{1/2}\boldsymbol{R}_{M}\left(
\boldsymbol{\theta }_{0}^{A}\right) =\boldsymbol{0.}
\end{equation*}%
Then we have that the matrix $\boldsymbol{R}_{L}\left( \boldsymbol{\theta }%
_{0}^{A}\right) -\boldsymbol{R}_{M}\left( \boldsymbol{\theta }%
_{0}^{A}\right) $ is symmetric and idempotent and in this case the number of
eigenvalues different to zero and equal 1 coincide with the trace of $%
\boldsymbol{R}_{L}\left( \boldsymbol{\theta }_{0}^{A}\right) -\boldsymbol{R}%
_{M}\left( \boldsymbol{\theta }_{0}^{A}\right) ,$

\begin{eqnarray*}
trace\left( \boldsymbol{R}_{L}\left( \boldsymbol{\theta }_{0}^{A}\right)
\right) &=&trace\left( \boldsymbol{L}\left( \boldsymbol{\theta }%
_{0}^{A}\right) \left( \boldsymbol{L}\left( \boldsymbol{\theta }%
_{0}^{A}\right) ^{T}\boldsymbol{L}\left( \boldsymbol{\theta }_{0}^{A}\right)
\right) ^{-1}\boldsymbol{L}\left( \boldsymbol{\theta }_{0}^{A}\right)
^{T}\right) \\
&=&trace\left( \boldsymbol{L}\left( \boldsymbol{\theta }_{0}^{A}\right) ^{T}%
\boldsymbol{L}\left( \boldsymbol{\theta }_{0}^{A}\right) \left( \boldsymbol{L%
}\left( \boldsymbol{\theta }_{0}^{A}\right) ^{T}\boldsymbol{L}\left(
\boldsymbol{\theta }_{0}^{A}\right) \right) ^{-1}\right) \\
&=&trace\left( \boldsymbol{I}_{h_{1}\times h_{2}}\right) =h_{1}.
\end{eqnarray*}

Similarly,

$$ trace\left( \boldsymbol{R}_{M}\left( \boldsymbol{\theta }_{0}^{A}\right)
\right) = h_2.$$

Finally, the second-order expansion of
\begin{equation*}
D_{\phi _{1}}\left( \boldsymbol{p}(\widehat{\boldsymbol{\theta }}_{\phi
_{2}}^{A}),\boldsymbol{p}(\widehat{\boldsymbol{\theta }}_{\phi
_{2}}^{B})\right)
\end{equation*}%
about $\left( \boldsymbol{p}\left( \boldsymbol{\theta }_{0}^{A}\right) ,%
\boldsymbol{p}\left( \boldsymbol{\theta }_{0}^{A}\right) \right) $ gives%
\begin{eqnarray*}
D_{\phi _{1}}\left( \boldsymbol{p}(\widehat{\boldsymbol{\theta }}%
_{\phi _{2}}^{A}),\boldsymbol{p}(\widehat{\boldsymbol{\theta }}_{\phi
_{2}}^{B})\right) & = & \frac{\phi _{1}^{\prime \prime }\left( 1\right) }{2}\left(
\boldsymbol{p}\left( \widehat{\boldsymbol{\theta }}_{\phi _{2}}^{A}\right) -%
\boldsymbol{p}\left( \widehat{\boldsymbol{\theta }}_{\phi _{2}}^{B} \right) \right) ^{T}\boldsymbol{D}_{\boldsymbol{p}(\boldsymbol{\theta }%
_{0}^{A})}^{-1}\left( \boldsymbol{p}\left( \widehat{\boldsymbol{\theta }}%
_{\phi _{2}}^{A}\right) -\boldsymbol{p}\left( \widehat{\boldsymbol{\theta }}_{\phi _{2}}^{B}\right) \right) +o_{p}(1) \\
& = & \frac{\phi _{1}^{\prime \prime }\left( 1\right) }{2}\left(
\boldsymbol{p}\left( \widehat{\boldsymbol{\theta }}_{\phi _{2}}^{A}\right) -%
\boldsymbol{p}\left( \widehat{\boldsymbol{\theta }}_{\phi _{2}}^{B}\right) \right) ^{T}\boldsymbol{D}_{\boldsymbol{p}(\boldsymbol{\theta }%
_{0}^{A})}^{-1/2}\boldsymbol{D}_{\boldsymbol{p}(\boldsymbol{\theta }%
_{0}^{A})}^{-1/2}\left( \boldsymbol{p}\left( \widehat{\boldsymbol{\theta }}%
_{\phi _{2}}^{A}\right) -\boldsymbol{p}\left( \widehat{\boldsymbol{\theta }}_{\phi _{2}}^{B}\right) \right) +o_{p}(1)
\end{eqnarray*}

Therefore, as we have shown in the previous proof, the asymptotic distribution of $T_{A-B}^{\phi _{1,}\phi _{2}}$ is
a chi-square distribution with $h_{1}-h_{2}$ degrees of freedom.

\end{document}